\documentclass[a4paper,11pt]{article}

\usepackage[T1]{fontenc}
\usepackage[utf8]{inputenc}
\usepackage{fullpage}
\usepackage{times}
\usepackage{microtype}
\usepackage{amsmath,amssymb,amsthm}
\usepackage{booktabs}
\usepackage{float}
\usepackage{algorithm}
\usepackage[noend]{algpseudocode}
\usepackage{enumitem}
\usepackage[small]{caption}
\usepackage{authblk}
\usepackage{natbib}
\usepackage[hidelinks]{hyperref}
\usepackage{cleveref}
\usepackage{xcolor}

\newtheorem{theorem}{Theorem}[section]

\newtheorem{lemma}[theorem]{Lemma}
\newtheorem{corollary}[theorem]{Corollary}
\theoremstyle{definition}
\newtheorem{definition}[theorem]{Definition}

\theoremstyle{remark}

\newcommand{\ceil}[1]{\left\lceil #1\right\rceil}
\newcommand{\floor}[1]{\left\lfloor #1\right\rfloor}
\newcommand{\pos}[1]{\left(#1\right)_{+}}
\newcommand{\MMS}{\operatorname{MMS}}
\allowdisplaybreaks

\title{\bf Maximin Share Allocations Beyond Weakly \\ Lexicographic Valuations}

\author[]{Nicholas Teh}
\affil[]{University of Oxford, UK}

\date{\vspace{-10mm}}

\begin{document}
\maketitle

\begin{abstract}
We study the fair division of indivisible goods among agents with nonnegative additive valuations. An agent's maximin share (MMS) is the value she can guarantee by partitioning the goods into as many bundles as there are agents and receiving a least-valued bundle. We prove the existence of exact MMS allocations for a valuation domain that generalizes several few-valued domains as well as weakly lexicographic valuations. For each agent, the lower-valued goods, after a positive rescaling, may take values in $\{0,1,d,d+1\}$, $\{0,1,2,2e\}$, or $\{0,1,2,3,4\}$, where $d,e\ge 2$ are integers. Above these goods, the valuation may contain arbitrarily many value classes, subject to the condition that a single good in each class is worth at least the total value of all strictly lower-valued goods. The value classes, scaling factors, and parameters may vary across agents. We give an algorithm that computes an MMS allocation in $O(nm\log^2(2+m))$ time, where $n$ is the number of agents and $m$ is the number of goods.
\end{abstract}

\section{Introduction}

Fair division studies how resources can be allocated among agents with different preferences. When the resources are indivisible, even an equal division of their number need not be fair: two agents may attach very different values to the same good. For additive valuations, the \emph{maximin share} (MMS), introduced by \citet{budish2011}, provides an agent-specific comparison. An agent partitions the goods into as many bundles as there are agents and receives a bundle of minimum value according to her valuation. Her MMS is the greatest value that she can guarantee in this way. An allocation is MMS-fair if every agent receives at least her own MMS.

However, MMS allocations need not exist for general additive valuations \citep{procacciawang2014}. This has led to two main approaches: approximating each agent's MMS, and identifying valuation domains on which exact MMS can be attained. Approximation algorithms apply broadly; for example, \citet{huangzhou2025} obtain a $7/9$ guarantee and a polynomial-time $(7/9-\varepsilon)$ guarantee for every $\varepsilon>0$. Exact results, on the other hand, can reveal which features of preferences make the full guarantee possible. We pursue the second approach.

Two well-studied restrictions offer different reasons to expect exact MMS fairness. Under \emph{few-valued valuations}, each agent uses only a small number of values for individual goods. Exact MMS is known for the values $\{0,1,2\}$ \citep{amanatidis2017} and for several forms of bivalued valuations \citep{feige2022,ebadian2022,zhang2026}. Under \emph{weakly lexicographic valuations}, goods of equal value form classes, and an agent values one good in a higher class more than all goods in lower classes together. There may be arbitrarily many classes, but lower-valued goods cannot compensate for the loss of a higher-valued good. Exact MMS can also be computed in this domain \citep{ebadian2022}.

These restrictions leave a natural possibility between them. An agent may have several high-priority classes but make ordinary additive tradeoffs among her lower-valued goods. For example, she may value any computer more than all available office supplies together, while valuing a notebook at four and each pen at two. Moreover, another agent may place the same computer among her lower-valued goods. Can exact MMS be guaranteed when the lexicographic comparisons apply only above such a lower part?

\subsection{Our Results}

We answer this question affirmatively for three families of lower valuations. After a positive scaling, the values in the lower part may belong to any one of
\begin{equation}\label{eq:lower-values}
 \{0,1,d,d+1\},\quad
 \{0,1,2,2e\},\quad
 \{0,1,2,3,4\},
 \quad d,e\in\mathbb Z_{\ge2}.
\end{equation}
There is no bound on the number of goods at any of these values. An agent may then have any finite number of higher value classes. At every such class, one good must be worth at least the total value of all strictly lower-valued goods. We call this construction a \emph{lexicographic extension}, allowing equality in the comparison with all lower-valued goods. The formal definition appears in Section~\ref{sec:preliminaries}.

Our main theorem gives an exact MMS allocation in $O(nm\log^2(2+m))$ time for every instance in which each agent has such a valuation. All choices in the description are agent-specific: the lower family, the scale, the integer parameter, the number and values of the higher classes, and which goods belong to them. The theorem therefore applies even when the agents disagree on every class boundary. The algorithm runs in polynomial time. The lower families give positive cases of the question of \citet[Open Problem~6]{amanatidis2023survey} about further structured valuation domains with more than two values that admit exact MMS allocations.

The result contains the MMS existence guarantees for $\{0,1,2\}$ valuations, personalized bivalued valuations with integer ratios, and weakly lexicographic valuations. 
It also allows an agent to use all four positive values $1,2,3,4$ in her lower part, where the weakly lexicographic condition need not hold. For example, start with two goods at each of these four values. Their total is $20$. Add two goods worth $20$, then two worth $60$, then two worth $180$, and continue in the same way. Every added class meets our condition with equality, and the number of distinct values can grow without bound. Yet the lower part allows a good worth four to be replaced by goods worth three and two. The higher classes satisfy lexicographic comparisons, while the lower goods retain additive tradeoffs.

The algorithm allocates bundles one at a time. It first uses the standard reduction in which every agent's singleton values are sorted separately, so all agents agree on the order of the resulting rank goods. With $k$ agents remaining, it finds the smallest $q$ for which the ranks $1,k+1,\dots,(q-1)k+1$ meet some agent's target. Among acceptable bundles whose $j$-th rank is at least $(j-1)k+1$, it then moves the selected ranks as far right as possible, prioritizing the first rank, then the second, and so on. Any agent who accepts the resulting bundle may receive it.

The main challenge is ensuring that a bundle assigned to one agent leaves suitable partitions for every other agent. Our proof connects two facts about the lower valuations. First, a bundle chosen far enough to the right can be removed while retaining one fewer target bundle. Second, removing the first acceptable spaced bundle together with one additional high-ranked good still leaves two fewer target bundles. The latter fact controls what happens when an allocation removes fewer goods from an upper class and correspondingly more goods from lower classes. Every additional upper good left over can replace one lower target bundle. Section~\ref{sec:extension} develops this argument and shows why it can be repeated through arbitrarily many classes.

\subsection{Related Work}

The literature on MMS and other fairness notions for indivisible goods is surveyed by \citet{amanatidis2023survey}. We first discuss the valuation domains most closely related to our result. For bivalued valuations, \citet{feige2022} proves exact MMS existence when all agents use a common pair of values. \citet{ebadian2022} allow agent-specific pairs whose larger value is an integer multiple of the smaller one. More recently, \citet{zhang2026} gives a polynomial-time algorithm for arbitrary personalized pairs. Our main result includes the integer-ratio domain but not all arbitrary personalized pairs. Its additional scope comes from allowing three or four positive lower values together with unboundedly many higher classes.\footnote{Restricted valuation domains generally permit stronger guarantees for allocation rules; see, e.g., some works on binary valuations \citep{brandl2026binarycharacterizations,HalpernPrPs20,neohteh2025efx,suksompongteh2022weightednash,suksompongteh2023matroid,montanari2025weightedenvy} or identical valuations \citep{choo2024house,elkind2024chorebudgets,plautroughgarden2020,scarlett2023oneandall}.}

Lexicographic preferences have also been studied in conjunction with envy-freeness and efficiency. \citet{hosseini2021lexicographic} consider strict lexicographic preferences. For weakly lexicographic preferences, \citet{ebadian2022} compute MMS allocations that are also Pareto optimal, meaning that no allocation makes one agent better off without making another worse off. \citet{aziz2023possible} and \citet{hosseini2024weakly} further obtain MMS together with Pareto optimality and envy-freeness up to any good (EFX). The last property requires that an agent not envy another agent after any single good is removed from the latter's bundle. Polynomial-time EFX algorithms are also known for personalized bivalued goods \citep{byrka2026,jintao2026}. Our result extends the valuation domain for exact MMS; it does not claim these additional guarantees for the proposed algorithm.

A different restriction is that of \emph{factored valuations}: after scaling, each positive singleton value divides the next larger positive value. The family $\{0,1,2,2e\}$ in \eqref{eq:lower-values} is factored. \citet{ebadian2022} show how to compute an individual's MMS for all factored valuations, but their exact allocation results concern the bivalued and weakly lexicographic subclasses. The exact MMS theorem for all factored instances of \citet{garg2025chores} concerns chores, not goods. Our domain need not be factored: the values $2$ and $3$ can both occur in the lower part, and no divisibility condition is imposed on higher values.

Our algorithm also builds on techniques from the broader MMS literature. The ordered-instance reduction is due to \citet{bouveretlemaitre2016}; we use the target-preserving formulation of \citet{barmankrishnamurthy2020}. Algorithms that fill bundles until an agent reaches a target are common in MMS approximation \citep{ghodsi2018,gargetal2019}, as are valid reductions that preserve the remaining agents' MMS values \citep{ghodsi2018,gargtaki2021}.

More directly, \citet{ebadian2022} use the spaced bundles $\{1,k+1,2k+1,\dots\}$ in their exact MMS algorithms for weakly lexicographic and personalized integer-ratio bivalued valuations. Their proofs place such bundles within MMS partitions. \citet{feige2022} chooses a smallest bundle in an MMS partition and takes the latest goods within each of its two value classes. In approximation algorithms, \citet{heidari2026} move the final selected rank as far right as acceptability permits, and \citet{huangzhou2025} also vary an earlier rank when selecting triples. Our rule chooses the whole rank vector lexicographically for arbitrary bundle sizes. The new allocation argument establishes the two partition properties in Section~\ref{sec:extension} for the three lower families and proves that they are preserved when higher classes are added.

Lexicographic comparisons with bundles feasible for a single agent also appear in the analysis of chore-allocation algorithms \citep{huangsegal2023,garg2025chores}. Our comparison is between increasing rank vectors of a fixed size under the bounds in \eqref{eq:rank-family}, and is used to preserve lower bounds on the remaining bundles' values.

For an individual valuation with a bounded number of distinct singleton values, count-based dynamic programming already yields polynomial-time MMS computation \citep{nguyenrothe2020,nguyenrothe2023}. The MMS calculation across value classes for weakly lexicographic preferences is also known \citep{aziz2023possible,hosseini2024weakly}. We give explicit formulas for the lower families and allow an arbitrary lower valuation in the calculation for one added class. These computations accompany the simultaneous exact allocation guarantee, including the interaction between a few-valued lower part and arbitrarily many higher classes.

\section{Preliminaries}\label{sec:preliminaries}

An instance is $I=(N,G,(v_i)_{i\in N})$, where $N=\{1,\dots,n\}$ is the set of agents and $G$ is a set of $m$ indivisible goods. A \emph{bundle} is a subset of $G$. Write $v_i(g)$ for $v_i(\{g\})$. Each valuation is nonnegative and additive: $v_i(S)=\sum_{g\in S}v_i(g)$ for every $S\subseteq G$. An allocation $A=(A_i)_{i\in N}$ is a partition of $G$; empty bundles are allowed.
For a valuation $v$ on a set $U$ and an integer $k\ge1$, let $\MMS_k(v,U)=\max_{(P_1,\dots,P_k)\text{ a partition of }U} \min_{1\le j\le k}v(P_j)$.
Agent $i$'s MMS in the original instance is $\mu_i=\MMS_n(v_i,G)$. An allocation is \emph{MMS-fair} if $v_i(A_i)\ge\mu_i$ for every agent $i$. We also use a \emph{target partition}: a partition into a specified number of bundles, each worth at least a specified target to the valuation under consideration. Thus a $k$-partition at target $t$ exists exactly when $t\le\MMS_k(v,U)$.

Positive scaling of a valuation scales its MMS and all bundle values by the same amount. We may therefore normalize a lower valuation by its scale when proving statements about \eqref{eq:lower-values}. Bundle values in a normalized lower valuation are integers, so replacing a real target $t$ by $\ceil{t}$ changes neither acceptability nor the existence of a target partition. Values in higher classes, however, need not be integer multiples of the lower scale.

\paragraph{The valuation domain.}
A \emph{value class} of $v$ consists of all goods having a specified positive value. The following definition makes precise the extension used throughout the paper.

\begin{definition}[Lexicographic extension]\label{def:domain}
A valuation $v$ is a lexicographic extension of its restriction to a lower part $C\subseteq G$ if the goods outside $C$ form nonempty value classes $L_1,\dots,L_s$, with values $\max_{g\in C}v(g)<H_1<\cdots<H_s$ and $v(g)=H_j$ for all $g \in L_j$,
and $H_j \ge \sum_{\{g\in G:\ v(g)<H_j\}}v(g)$ for all $j = 1,\dots, s$.
The list of higher classes may be empty. All zero-valued goods belong to $C$, and the maximum over an empty lower part is interpreted as zero.
\end{definition}

The sum in Definition~\ref{def:domain} includes every lower-valued good, with its multiplicity. Thus a value class may contain any number of goods, but those goods contribute to the requirement on each higher class. When all inequalities are strict and $C$ has one positive value, the definition gives the usual weakly lexicographic domain \citep[Definition~2.2]{ebadian2022}, with zero-valued goods also allowed. We permit equality as well. This use of equality is distinct from the word ``weakly'' in weakly lexicographic preferences, which refers to ties among goods within a class.

We study instances in which, for each agent $i$, the lower part $C_i$ has singleton values in a positive scaling of one of the sets in \eqref{eq:lower-values}. The definition does not require the agents to agree on their lower parts or on any value class. Deleting goods preserves the condition: the lower values remain in the same set, and each sum in Definition~\ref{def:domain} can only decrease.

\paragraph{Fixed targets and ordered instances.}
The algorithm accepts fixed targets $0\le t_i\le\mu_i$. A bundle is \emph{acceptable to agent $i$} if its value to her is at least $t_i$, and \emph{acceptable} if it is acceptable to at least one remaining agent. Agents with target zero are assigned empty bundles and removed before the first iteration. For an exact MMS allocation we set $t_i=\mu_i$; these targets are not recomputed during the algorithm.

Sort each agent's singleton values separately in nonincreasing order. The auxiliary goods represent ranks, so all agents agree on their order, although their values at a rank may differ. This preserves each agent's multiset of singleton values, her MMS, and membership in our domain \citep{bouveretlemaitre2016,barmankrishnamurthy2020}.

An allocation of auxiliary ranks can be transferred back without decreasing any agent's value \citep[Theorem~3.2]{barmankrishnamurthy2020}. Process ranks in increasing order. When rank $r$ belongs to agent $i$, let her choose a most valuable remaining original good. At most $r-1$ original goods have already been taken, so her choice is worth at least her $r$-th largest singleton value. We therefore work with ordered instances below. Ties may be broken arbitrarily.

Let $U$ denote the current set of unallocated auxiliary goods, let $k$ be the number of remaining agents, and write $u_j$ for the $j$-th good in their common order. A \emph{prefix} of length $X$ is the set $\{u_1,\dots,u_X\}$. When used in a partition argument, a rank beyond $|U|$ denotes a distinct zero-valued dummy good. These dummy goods are never required by the allocation algorithm.

\section{The Algorithm}\label{sec:algorithm}
We now introduce our algorithm. Intuitively, it seeks a bundle that meets an agent's target while leaving enough of the more valuable goods for those who remain. With $k\ge2$ agents, consider the \emph{spaced bundles} used by \citet[Sections~4.3--4.4]{ebadian2022}: $S_q=\{u_1,u_{k+1},\dots,u_{(q-1)k+1}\}$ and $S_0=\varnothing$.
Let $q$ be the least index for which $S_q$ is acceptable. Among bundles of this size, consider
\begin{equation}\label{eq:rank-family}
 \mathcal F_{k,q}=
 \{\{u_{r_1},\dots,u_{r_q}\}:\
1\le r_1<\cdots<r_q\le |U|, \ r_j\ge(j-1)k+1\text{ for every }j\}.
\end{equation}
The bounds ensure that a selected bundle uses at most $\ceil{X/k}$ goods from any prefix of length $X$.

We choose the acceptable member of $\mathcal F_{k,q}$ with the \emph{lexicographically largest} increasing rank vector: maximize $r_1$, subject to acceptability; then maximize $r_2$ with $r_1$ fixed; and continue. Equivalently, at the first coordinate where two vectors differ, prefer the larger rank. We call this the \emph{rightmost acceptable bundle}. The comparison is between rank vectors, not between the agents' preferences over bundles.

\begin{algorithm}[!ht]
\caption{Exact MMS allocation by rightmost acceptable bundles}\label{alg:exact}
\begin{algorithmic}[1]
\Require An ordered instance and targets $0\le t_i\le\mu_i$.
\State Assign an empty bundle to each agent with $t_i=0$ and remove her.
\While{at least two agents remain}
    \State Let $k$ be the number of remaining agents and $U$ the unallocated goods.
    \State Find the least $q$ for which $S_q$ is acceptable.
    \State Choose the acceptable $B\in\mathcal F_{k,q}$ with largest rank vector.
    \State Give $B$ to any agent $i$ with $v_i(B)\ge t_i$; remove $i$ and $B$.
\EndWhile
\State Give all remaining goods to the last agent, if there is one.
\State If all targets were zero, assign all goods arbitrarily.
\State Transfer the allocation of ranks to the original goods.
\end{algorithmic}
\end{algorithm}

The search in each iteration is well defined whenever every remaining agent has a target partition. In each consecutive group of at most $k$ goods, the first good is worth at least the group's average. Thus, the full spaced sequence is worth at least $v_i(U)/k\ge t_i$ to every remaining agent $i$, and $q\le\ceil{|U|/k}$ exists. Furthermore, an acceptable selected bundle contains no good worth zero to its recipient. Deleting such a good would give an acceptable $(q-1)$-bundle whose $j$-th rank is at least $(j-1)k+1$, contradicting the choice of $q$.

\begin{theorem}\label{thm:main}
Suppose that each agent's valuation is a lexicographic extension of a lower valuation whose singleton values, after a positive scaling, lie in one of the sets in \eqref{eq:lower-values}. Then an exact MMS allocation exists. Algorithm~\ref{alg:exact} finds one in $O(nm\log^2(2+m))$ time and $O(nm)$ space, including the computation of the MMS values.
The algorithm also meets any fixed targets $0\le t_i\le\mu_i$, with any choice of accepting recipient.
\end{theorem}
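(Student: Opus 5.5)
The plan is to prove correctness by induction on the number $k$ of remaining agents, keeping one invariant: at the start of every iteration, each remaining agent $i$ has a $k$-partition of the current $U$ at target $t_i$, that is, $t_i\le\MMS_k(v_i,U)$. Initially this holds because $t_i\le\mu_i$. When $k=1$ the last agent receives all of $U$ and is satisfied. The final transfer step, via \citet[Theorem~3.2]{barmankrishnamurthy2020}, yields an allocation of the original goods. Correctness therefore reduces to a single step, which must hold for any accepting recipient. If $B$ is the rightmost acceptable member of $\mathcal F_{k,q}$, with $q$ minimal, then every other remaining agent $a$ must satisfy $\MMS_{k-1}(v_a,U\setminus B)\ge t_a$. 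We only need to handle agents $a$ for whom $B$ is not necessarily acceptable, since the recipient leaves. It is convenient to fix $a$, rescale, and replace $t_a$ by the target itself, so that $\MMS_k(v_a,U)\ge t_a$.

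The reduction step rests on two partition properties for a single valuation, stated for any valuation in the domain. \textbf{(P1)} Let $B\in\mathcal F_{k,q}$ be such that no member of $\mathcal F_{k,q}$ lexicographically larger than $B$ is acceptable to $a$ at a threshold where $B$'s own structure is forced. Then $U\setminus B$ has a $(k-1)$-partition at target $t_a$. The key point is that the minimality of $q$ makes $S_{q-1}$ unacceptable to everyone, including $a$. Rightmost-ness then means each rank of $B$ sits as late as possible given the earlier ranks. I would prove P1 by an exchange argument: take an MMS $k$-partition of $a$ and compare $B$ against the bundle $P$ containing $u_1$. Because the ranks of $B$ obey $r_j\ge (j-1)k+1$, $B$ uses at most $\lceil X/k\rceil$ goods of each prefix, so $B$ can be swapped into $P$'s position good by good without decreasing any other bundle. \textbf{(P2)} Removing the first acceptable spaced bundle $S_q$ together with one extra high-ranked good leaves a $(k-2)$-partition. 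Both properties are first proved for the three lower families in \eqref{eq:lower-values} by explicit counting. Since values are integers in a small set, a bundle of value $\ge t$ is determined up to a few patterns (for example, $\{d+1\}$ versus $\{d,1\}$, or $4$ versus $3+1$ versus $2+2$). I would verify these properties via closed-form MMS formulas in terms of the class counts, split into cases according to which family is in use.

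Next I would lift P1 and P2 through the higher classes by induction on the number $s$ of classes, adding the top class $L_s$ to a valuation that satisfies both properties. Because $H_s$ is at least the total of all lower goods, $\MMS_k$ has a simple form. If $|L_s|\ge k$, every bundle gets $\lfloor |L_s|/k\rfloor$ top goods, and the remainder is handled by the lower valuation with fewer parts. The extra top goods effectively act as completed bundles. The delicate case is when $B$ takes fewer goods from $L_s$ than a spaced bundle would, and correspondingly more from lower classes. Each leftover top good then replaces one lower target bundle, and P2 (with its one additional high-ranked good) exactly pays for the mismatch. I expect this lifting, especially keeping P2 itself inductive, to be the main obstacle. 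I would first try to show that the pair (P1, P2) is closed under adding a class, rather than P1 alone.

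For the running time, MMS values come from the count-based formulas: a few sorted counts per agent for the lower part, plus the class recursion, computed in $O(m\log m)$ per agent. In each iteration, $q$ is found by binary search over prefix sums of spaced values for all agents. The lexicographically rightmost vector is built coordinate by coordinate, with a binary search per coordinate. Acceptability of a feasible completion is tested by taking the latest allowed ranks for the remaining coordinates, which is monotone. Amortized over iterations, bundle sizes sum to $m$, which gives $O(nm\log^2(2+m))$ time and $O(nm)$ space.
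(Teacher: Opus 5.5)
Your overall architecture matches the paper's: the invariant $t_i\le\MMS_k(v_i,U)$, two partition properties proved for the lower families, and a class-by-class lifting in which spare upper goods replace lower target bundles. The genuine gap is your proof of P1. The prefix bound $|B\cap\{u_1,\dots,u_X\}|\le\ceil{X/k}$ does not let you swap $B$ into the part containing $u_1$. Your argument uses neither the rightmost choice nor the value family, so it would give exact MMS for every additive valuation. Concretely, take $k=2$, values $3,3,3,3,2,2,2$ and target $9$. The spaced bundle $S_4$ (values $3,3,2,2$) lies in $\mathcal F_{2,4}$ yet leaves $3+3+2=8$, whereas the rightmost acceptable bundle, with ranks $(4,5,6,7)$, leaves $9$.

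The pigeonhole exchange is valid only for the consecutive block $R_{q-1}=\{u_{(q-1)(k-1)+1},\dots,u_{(q-1)k+1}\}$, since some part of a target partition has $q$ goods among the first $(q-1)k+1$ ranks (Lemma~\ref{lem:rank-reduction}). Rightmostness is what links $B$ to this block:
\begin{itemize}
\item If $R_{q-1}$ is acceptable, $B$ dominates it coordinatewise, and the exchange applies.
\item If $R_{q-1}$ is rejected, no selected good can be traded for a later unselected one at a loss of at most $v(B)-t$. Together with rounding the target up to an even value when only even values remain, this reduces everything to $v(B)\le t$.
\end{itemize}
Only then can you verify, family by family, that the exact count conditions for a target partition (Lemmas~\ref{lem:partition-adj}, \ref{lem:partition-even}, \ref{lem:partition-five}) survive deleting $B$ with $k-1$ bundles (Lemma~\ref{lem:low-removal}). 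Enumerating value patterns of a single bundle does not substitute for this, because the conditions are inequalities on unbounded global counts.

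Two further steps also need repair. First, in the lifting, suppose there are $k\ell$ upper goods and $B$ takes only $\ell-e$ of them. Then its lower part may exceed $q_U$ and every lower prefix bound by $e$, not by one, so P2 as stated pays only for $e=1$. You need its consequence, Lemma~\ref{lem:additional-removals}: removing any set with at most $q_v+d$ goods and at most $\ceil{X/k}+d$ goods in each prefix leaves a $(k-d-1)$-partition. The proof deletes the first $d-1$ goods separately, each destroying at most one part; the remaining goods dominate $S_{q_v}\cup\{u_2\}$ rankwise. It is this stronger statement, not P2 itself, that must be shown to survive adding a class. The case $e=0$ uses P1 of the lower valuation at $s=t-\ell H$. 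This requires checking that, with the upper goods of $B$ fixed, the rightmost comparison descends to its lower part; lower ranks shifted by $k\ell$ keep the global spacing.

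Second, your coordinate test is backwards. The latest ranks give the least valuable completion, so the binary search can reject prefixes that do have acceptable completions. It can even reject $r_1=1$ when $S_q$ is the only acceptable member, and then it fails to return the rightmost acceptable bundle on which P1 depends. Test instead the earliest completion $r_\ell=\max\{(\ell-1)k+1,\,r+\ell-j\}$. It dominates every completion coordinatewise, and its value is nonincreasing in $r$.
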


In fact, each removal preserves or increases the remaining agents' MMS values in the ordered instance. If $B$ is selected with $k\ge2$ agents and remaining goods $U$, then every agent $i$ who remains after the allocation satisfies
\begin{equation}\label{eq:monotonicity}
 \MMS_{k-1}(v_i,U\setminus B)\ \ge\ \MMS_k(v_i,U).
\end{equation}
The final transfer guarantees each agent's target in the original instance.

\paragraph{Changing only the final rank is insufficient.}
Moving the final selected rank farther right is used by \citet{heidari2026}, and the triple selection of \citet{huangzhou2025} also varies an earlier rank while keeping the first good fixed. Moving only the final selected rank need not suffice in our domain. Consider two identical agents and goods worth $13,13,13,13,3,3,2,2,2$.
The lower goods total $12$, so the class worth $13$ meets the condition in Definition~\ref{def:domain}. Both agents have MMS $32$, attained by $(13+13+3+3,\;13+13+2+2+2)$. The first acceptable spaced size is five. A bundle with values $13,13,3,2,2$ is acceptable but leaves only $31$; replacing just its last good cannot reduce its value. Algorithm~\ref{alg:exact} instead selects ranks $(3,4,7,8,9)$, worth $32$, and leaves $32$. It is the lexicographic choice of all selected ranks, rather than a change to the final selected good alone, that gives the required guarantee.

\section{Preserving MMS Across Value Classes}\label{sec:extension}

We now explain why the algorithm remains valid when higher value classes are added. The proof works with one agent's valuation at a time. Fix an ordered valuation $v$, a number $k\ge2$, and a target $t>0$ that admits a $k$-partition. Let $q_v$ be the least $q$ with $v(S_q)\ge t$.

Two partition properties of the lower valuations are sufficient. We state them together to make their different roles clear.
\begin{enumerate}[label=(\roman*),leftmargin=2.2em]
\item\label{item:one} For every $1\le q\le q_v$, removing any $B\in\mathcal F_{k,q}$ whose rank vector is at least that of every $t$-acceptable member of $\mathcal F_{k,q}$ leaves a $(k-1)$-partition at $t$. When no member meets $t$, the comparison imposes no further condition on $B$.
\item\label{item:two} For $k\ge3$, removing $S_{q_v}\cup\{u_2\}$ leaves a $(k-2)$-partition at $t$.
\end{enumerate}
Property~\ref{item:one} gives the partition preservation required for a valid reduction \citep{ghodsi2018,ebadian2022}. It is what the allocation algorithm needs, and applies even if the selected bundle is unacceptable to the agent whose remaining partition is being considered. Property~\ref{item:two} allows one additional high-ranked good to be removed at the cost of one additional target bundle. Its purpose is to account for a shortage of selected upper-class goods.

\begin{theorem}\label{thm:extension}
Suppose a family of valuations is closed under deleting goods and satisfies properties~\ref{item:one} and~\ref{item:two} for every feasible target and every number of bundles. Add any number of goods of a common value $H$, larger than every existing singleton value, with $H$ at least the total value of the existing goods. The resulting family also satisfies both properties. Consequently, either property continues to hold after any finite sequence of such additions.
\end{theorem}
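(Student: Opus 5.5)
Fix a valuation $v$ in the original family on goods $U_0$, with total value $T=v(U_0)$. Add $h$ goods of value $H\ge T$, larger than every singleton value of $v$; call the result $v'$. In the ordered instance these $h$ goods occupy ranks $1,\dots,h$, followed by the old goods shifted by $h$. Fix $k\ge2$ and a feasible target $t$ for $v'$. I would first describe the $k$-partitions at $t$ explicitly. Write $t=aH+b$ with $a=\floor{t/H}$ and $0\le b<H$ (in the boundary case $b=0$, take $a\ge1$ with remainder $H$ if convenient). Since the old goods total at most $H$, a bundle with fewer than $a$ upper goods is worth less than $t$ unless $b$ absorbs the difference; the case $H=T$ with equality needs care. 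The key structural lemma I aim for is: a $k$-partition at $t$ exists iff, for some $p\le k$, we have $h\ge pa+(k-p)(a+1)$ type counts with the remaining $p'$ bundles (those receiving exactly $a$ upper goods) completed by an old-goods partition at target $b$. Equivalently, letting $e=h-ka$ be the surplus upper goods, each surplus good can fully replace one lower target bundle (because $H\ge b$), so feasibility reduces to $\MMS_{k-e}(v,U_0)\ge b$ when $0\le e<k$, and is automatic when $e\ge k$. The closure under deletion is what lets me apply the hypotheses to $v$ restricted to the remaining old goods.

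Next I would compute $q_{v'}$ and the structure of $\mathcal F_{k,q}$ members. The spaced bundle $S_q$ takes upper goods at ranks $1,k+1,\dots$ as long as these are $\le h$; since $h\approx ka+e$, it contains roughly $a$ or $a+1$ upper goods, then its later ranks fall in the old part at ranks spaced by $k$ but shifted. I would show that $q_{v'}$ is determined by either reaching $a+1$ upper goods, or by $a$ upper goods plus the least spaced prefix of old goods meeting $b$ with respect to $k-e$ or $k$ bundles after reindexing. For property~(i): a rank-maximal acceptable $B\in\mathcal F_{k,q}$ pushes its upper-class ranks as far right as possible. Split by the number $c$ of upper goods in $B$. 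If $c\ge a+1$ the remainder keeps enough upper goods trivially. If $c=a$ exactly, removing $B$ uses at most $a$ upper goods and the surplus for $k-1$ bundles is unchanged; the old-good part of $B$, reindexed, is a rightmost member of the appropriate $\mathcal F_{k',q'}$ for $v$, and property~(i) for $v$ gives the needed $(k'-1)$-partition at $b$. If $c<a$ (possible when lower goods compensate, e.g. equality $H=T$), then $B$ uses more old goods; here the shortage $a-c$ of upper goods removed means $a-c$ extra surplus upper goods remain, each replacing one lower target bundle, and I would invoke property~(ii) (removing $S_{q}\cup\{u_2\}$ costs two bundles) iteratively to charge the extra old goods consumed.

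For property~(ii) with $k\ge3$, removing $S_{q_{v'}}\cup\{u_2\}$ consumes one extra upper good if $h\ge2$; I would check the surplus count drops by exactly the two bundles' worth, or, if $u_2$ is an old good, apply property~(ii) of $v$ after the reindexing. The main obstacle I expect is the case $c<a$ in property~(i) and the matching of the rank constraints $r_j\ge(j-1)k+1$ after shifting by $h$ and changing the bundle count from $k$ to $k-e$: I must verify that the old-goods portion of the rightmost $B$ is indeed rightmost in the reduced family for $v$ with the right number of bundles, and that the lexicographic maximality transfers. I would handle this by first proving that maximality forces the upper ranks of $B$ to be the last possible ones, then arguing the old suffix is lex-maximal among acceptable completions. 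The final statement about finite sequences follows by induction on the number of additions, since the new family is again closed under deletion.
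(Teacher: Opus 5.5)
Your outline has the same architecture as the paper's proof. You split by how many upper goods $B$ takes, and each spare upper good replaces one lower target bundle. With no shortage you use the rightmost comparison and property~(i) of $v$; with a shortage you use property~(ii) of $v$. The shortage step, however, has a genuine gap.

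Take the divisible case $h=ka$ (your $e=0$). Then $q_{v'}=a+q_U$, where $q_U$ is the first $b$-acceptable $k$-spaced size of $v$. Suppose $B$ takes only $a-f$ upper goods, with $1\le f\le k-2$. Then exactly $f$ upper goods are spare, and you must show that deleting the lower part $B_L$ leaves $k-f-1$ target bundles at $b$.

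All you know about $B_L$ is $|B_L|\le q_U+f$ and $|B_L\cap\{u_1,\dots,u_X\}|\le\ceil{X/k}+f$ for every lower prefix. This holds because that prefix lies in the global prefix of length $ka+X$, which contains at most $a+\ceil{X/k}$ goods of $B$. Property~(ii), by contrast, is a statement about the single set $S_{q_U}\cup\{u_2\}$. Invoking it ``iteratively'' does not bridge this. A second application concerns a new spaced set of a residual instance with $k-2$ bundles, costs two further bundles, and still says nothing about the particular goods in $B_L$.

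The missing idea is rank domination, which is the paper's Lemma~\ref{lem:additional-removals}. For $f=1$ the bounds force the $j$-th smallest rank of $B_L$ to be at least $j$ for $j\le2$ and at least $(j-2)k+1$ for $j\ge3$. This is componentwise at least the ranks of $S_{q_U}\cup\{u_2\}$. Hence the complement of $B_L$ dominates position by position, and the $(k-2)$-partition transfers. For $f\ge2$, first delete the first $f-1$ goods of $B_L$, each destroying at most one part; the rest then satisfies the $f=1$ bounds.

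Note also that in property~(i) the set $B$ need not be acceptable to the agent in question, since it may be chosen for someone else. So the shortage case is the generic one, not an artifact of $H=v(U_0)$. For the same reason, your plan to show that maximality puts $B$'s upper ranks in the last positions is unnecessary, and it is false when $q<q_{v'}$, where $B$ is unconstrained. The clean no-shortage argument is different. A lexicographically later $b$-acceptable lower set, combined with $B$'s upper part, would be a later $t$-acceptable member of $\mathcal F_{k,q}$.

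Your treatment of property~(ii) for $v'$ also puts the use of property~(ii) of $v$ in the wrong place. The good $u_2$ is an old good only when $h=1$. Then $q_{v'}=1$, and deleting two goods trivially costs two parts.

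The case that actually needs property~(ii) of $v$ is $h=ka$ with small $a$. For $a=1$, the set $S_{q_{v'}}\cup\{u_2\}$ consists of two upper goods together with the lower spaced set $S_{q_U}$. That leaves $k-2$ upper goods for $k-2$ bundles and nothing spare, so $k-2$ lower target bundles at $b$ must survive the deletion of $S_{q_U}$. No count of spare upper goods gives this. It does follow from property~(ii) of $v$, because $S_{q_U}\subseteq S_{q_U}\cup\{u_2\}$.

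With the domination lemma added and this correction made, your outline becomes the paper's proof. The lemma upgrades plain property~(ii) at every level, so proving plain property~(ii) for $v'$ suffices for the induction.
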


Theorem~\ref{thm:extension} is the reason that the number of value classes need not be bounded. Its assumptions are substantive. For values $(5,5,3,3,1,1)$, the target $6$ is feasible for three bundles, using $(5+1,5+1,3+3)$. However, $S_{q_v}\cup\{u_2\}$ consists of the two fives and one three, leaving value five. Thus property~\ref{item:two} does not follow from the existence of a target partition for an arbitrary additive valuation.

\subsection{MMS after Adding a Higher Value Class}

The MMS computation across value classes for weakly lexicographic preferences appears in \citet[Lemma~1]{aziz2023possible} and \citet[Proposition~1]{hosseini2024weakly}. It distributes a class evenly and continues with the bundles receiving fewer goods from that class. The following identity permits an arbitrary lower valuation and equality in $H\ge v(U)$. It also explains which lower target the preservation proof must retain.

\begin{lemma}\label{lem:recursion}
Let $L$ contain $a$ goods of value $H$, and let $U$ consist of lower-valued goods, with $H\ge v(U)$. Write $a=k\ell+r$, where $0\le r<k$. Then $\MMS_k(v,U\cup L)=\ell H+\MMS_{k-r}(v,U)$.
\end{lemma}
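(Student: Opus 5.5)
We prove the identity by establishing the two inequalities separately, writing $M=\MMS_{k-r}(v,U)$ (with $M=\MMS_{k-r}(v,U)$ for $k-r\ge1$, always the case since $r<k$).

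\emph{Lower bound.} Take a $(k-r)$-partition of $U$ with every bundle worth at least $M$. Distribute the goods of $L$ so that $r$ bundles receive $\ell+1$ high goods and the remaining $k-r$ bundles receive $\ell$ high goods. We then place the $k-r$ lower bundles onto the $k-r$ bundles with only $\ell$ high goods, and give nothing from $U$ to the other $r$ bundles. The $r$ bundles are worth $(\ell+1)H\ge \ell H+v(U)\ge \ell H+M$, and the others are worth at least $\ell H+M$. Hence $\MMS_k(v,U\cup L)\ge \ell H+M$.

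\emph{Upper bound.} Fix any $k$-partition $(P_1,\dots,P_k)$ of $U\cup L$, and let $a_j=|P_j\cap L|$. We must show some bundle is worth at most $\ell H+M$.

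If some $a_j\le\ell-1$, then $v(P_j)\le(\ell-1)H+v(U)\le \ell H\le \ell H+M$, and we are done. So assume every $a_j\ge\ell$. Since $\sum a_j=k\ell+r$, at most $r$ bundles have $a_j\ge\ell+1$. Hence the set $J=\{j: a_j=\ell\}$ has $|J|\ge k-r$. The goods of $U$ lying in the bundles of $J$ form at most $|J|$ sets. We merge all the lower goods that lie outside $J$ into one chosen bundle of $J$ and merge extra bundles of $J$ if $|J|>k-r$; this yields a $(k-r)$-partition of a superset arrangement of $U$. More carefully: we pick any $k-r$ indices of $J$, add all other lower goods of $U$ to one of them, and obtain a partition $Q_1,\dots,Q_{k-r}$ of $U$ in which each $Q$ contains $P_j\cap U$ for a distinct $j\in J$, with all but one equal to it. Some $Q$ has value at most $M$.

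The main obstacle is this step: a minimum $Q$ might be the enlarged one rather than an unaltered $P_j\cap U$. We fix it by comparing minima directly. Among the chosen indices, take $j^*$ minimizing $v(P_j\cap U)$. Enlarging one bundle can only increase that bundle, but if the minimum over $Q$ is attained at the enlarged bundle, it is still at least $v(P_{j}\cap U)$ for its base $j$. Therefore $\min_{j\in J} v(P_j\cap U)\le \min_Q v(Q)\le M$ in all cases. Then $v(P_{j^*})=\ell H+v(P_{j^*}\cap U)\le \ell H+M$, as required.

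If $k-r$ partition pieces of $U$ are needed but $U$ has fewer goods, empty bundles are allowed, so the argument is unaffected. Combining the two bounds gives the identity.
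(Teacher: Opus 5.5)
Your proof is correct and follows the paper's argument: the same balanced distribution of the $H$-goods gives the lower bound, and for the upper bound you use the same observation that a bundle with at most $\ell-1$ high goods is worth at most $\ell H$, so at least $k-r$ bundles hold exactly $\ell$ high goods, and their lower parts are completed to a $(k-r)$-partition of $U$. The only difference is cosmetic: the paper argues contrapositively from a partition with minimum value $\tau>\ell H$, while you bound an arbitrary partition directly.
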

\begin{proof}
Give $\ell$ goods from $L$ to each bundle and one additional such good to $r$ bundles. Partition $U$ among the other $k-r$ bundles at its $(k-r)$-MMS. Since $H\ge v(U)\ge\MMS_{k-r}(v,U)$, every bundle reaches the stated lower bound.

For the reverse inequality, consider a partition with minimum value $\tau>\ell H$. Each bundle must contain at least $\ell$ goods from $L$: a bundle with at most $\ell-1$ has value at most $(\ell-1)H+v(U)\le\ell H$. At least $k-r$ bundles contain exactly $\ell$ goods from $L$, so their disjoint lower parts all have value at least $\tau-\ell H$. Completing these parts to a partition of $U$ gives $\tau-\ell H\le\MMS_{k-r}(v,U)$. Partitions with minimum value at most $\ell H$ already satisfy the same upper bound.
\end{proof}

Lemma~\ref{lem:recursion} reduces exact MMS computation to the lower part. Applying it repeatedly uses only one step per upper class. For allocation, however, the lower goods that remain depend on the bundle selected for another agent. We must account for those removals as well.

\subsection{Proof Idea for the Extension Theorem}

The main case in proving property~\ref{item:one} in Theorem~\ref{thm:extension} is when the number of upper goods is divisible by $k$. Suppose there are exactly $k\ell$ goods worth $H$, and write the target as $t=\ell H+s$, where $s>0$. Lemma~\ref{lem:recursion} ensures that the lower goods have a $k$-partition at $s$. Let $q_U$ be their first acceptable spaced size, so the full first acceptable size is $\ell+q_U$.

A selected bundle satisfying the rank bounds removes at most $\ell$ upper goods. Suppose it removes $\ell-e$ of them. Its lower part then contains at most $q_U+e$ goods and at most $\ceil{X/k}+e$ goods from any lower prefix of length $X$. Lemma~\ref{lem:additional-removals} in Appendix~\ref{app:extension} shows that property~\ref{item:two} allows these lower removals at the cost of at most $e+1$ target bundles. Thus at least $k-e-1$ lower target bundles remain when $1\le e\le k-2$.

After giving $\ell$ upper goods to each of the $k-1$ remaining bundles, the number of upper goods left over is exactly $k\ell-(\ell-e)-(k-1)\ell=e$.
Each is worth at least $s$, so these $e$ goods replace the lower target bundles that may have been lost. The other $k-e-1$ bundles use the lower partition at $s$. All $k-1$ bundles therefore reach $t$. If $e\ge k-1$, the upper goods suffice on their own.

When $e=0$, there are no extra upper goods. This is where the rightmost choice is needed. Keep the selected upper goods fixed. Any later lower bundle meeting $s$ would give a later full bundle meeting $t$, contrary to the choice of the selected bundle. Property~\ref{item:one} for the lower valuation therefore applies directly. The other cases, in which the upper count is not divisible by $k$ or the upper goods alone meet the target, require only distributing upper goods and discarding affected lower partition bundles. The complete proof, including preservation of property~\ref{item:two}, is in Appendix~\ref{app:extension}.

\subsection{The Two Properties for the Lower Valuations}
It remains to establish the two properties for the lower valuations. These require a different argument from the upper classes: one good need not outweigh all smaller goods. We derive exact conditions for a target partition, and show that the removals in the two properties preserve these conditions.

\begin{lemma}\label{lem:lower-properties}
Every positive scaling of a valuation with singleton values in one of the sets in \eqref{eq:lower-values} satisfies properties~\ref{item:one} and~\ref{item:two}.
\end{lemma}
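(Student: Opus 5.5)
Since positive scaling changes neither acceptability nor target partitions, I will normalize and work with integer singleton values in one of $\{0,1,d,d+1\}$, $\{0,1,2,2e\}$, $\{0,1,2,3,4\}$, with an integer target $t$. Zero-valued goods can be treated as dummy ranks at the end of the order, since they affect neither target partitions nor the conditions in properties~\ref{item:one} and~\ref{item:two}.

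\textbf{Step 1: exact feasibility criteria.} For each family I will describe when a multiset of counts admits a $k$-partition at target $t$. Let $n_x$ be the number of goods of value $x$. The intended form of each criterion is a finite system of prefix inequalities. For the big goods it reads as follows: each big good (value $d$ or $d+1$; $2e$; or $\ge 3$) can serve at most one bundle, up to the appropriate $\lfloor t/x\rfloor$ multiplicity. The remaining bundles are then filled greedily from the small goods, and the condition compares the total residual value with the deficit, plus a parity or rounding correction. For $\{0,1,d,d+1\}$, write $t=\alpha d+\beta$. A bundle uses roughly $\alpha$ large goods, with the $d{+}1$ goods absorbing part of the $\beta$ remainder, and units fill the rest. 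For $\{0,1,2,2e\}$ I will use divisibility: the $2e$ goods are packed as in Lemma~\ref{lem:recursion}, and the $1,2$ part is handled by the $\{0,1,2\}$ criterion, namely total value and number of ones against odd residues. For $\{0,1,2,3,4\}$ the target is small relative to the values, so I expect a case analysis over $t \bmod 12$-type residues, leading to a condition of the form $v(U)\ge kt$ together with a bounded number of count inequalities, such as enough ones to fix parity or enough non-four goods when $t\equiv 1,2,3 \pmod 4$. I will prove each criterion by showing that any partition can be normalized by exchanges into a canonical greedy one.

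\textbf{Step 2: property~\ref{item:one}.} The rank bounds in \eqref{eq:rank-family} guarantee that $B$ takes at most $\lceil X/k\rceil$ goods from every prefix. Maximality of the rank vector among $t$-acceptable bundles (or unacceptability) then means that $B$ uses as little value as possible from high ranks. Either $v(B)<t$, or $B$ is "just acceptable": shifting any rank right loses acceptability. From this I will deduce that the value of $B$ at each value level does not exceed what a single bundle of the canonical partition from Step 1 uses at that level. This is an exchange argument: if $B$ took an extra high good, replacing it by the later goods of the canonical bundle would give a later acceptable vector. The counts after removal then satisfy the Step 1 inequalities for $k-1$ bundles, because each inequality loses at most its per-bundle share. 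Minimality of $q\le q_v$ controls the case $q<q_v$, where $B$ is unacceptable and consists only of spaced-position goods. Its removal then costs at most the first-bundle share at every prefix.

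\textbf{Step 3: property~\ref{item:two}.} I will show that $S_{q_v}\cup\{u_2\}$ takes from each prefix at most what two canonical bundles take. Since $S_{q_v}$ is the first acceptable spaced set, its value lies below $t+\max v$. Adding $u_2$ amounts to at most one extra top good, which a second bundle would also contain when $k\ge3$ and top goods are abundant. When top goods are scarce, I will argue directly via the prefix inequalities. The counterexample $(5,5,3,3,1,1)$ suggests the danger is removing two big goods plus a mid good while the mid goods are needed. The value sets are chosen so that the mid goods ($d$ versus $d+1$, or $3$) can be substituted.

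\textbf{Main obstacle.} The hardest part is the $\{0,1,2,3,4\}$ family in Step 1 and its interplay with Step 3. There the exact criterion has the most cases, and showing that every inequality drops by at most one bundle's share under the lexicographically rightmost choice needs careful parity tracking. I would first try to reduce that family to a $\{1,2\}$-type analysis by pairing $3+1$ and $2+2$ with $4$, and check small targets $t\le 8$ exhaustively.
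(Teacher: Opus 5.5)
\textbf{Step 2 rests on a false claim.} You want to show that $B$ is dominated, value level by value level, by a single bundle of a canonical target partition. The rightmost rule cannot make this true, because property~\ref{item:one} also covers bundles that are unacceptable to the agent. When no member of $\mathcal F_{k,q}$ meets $t$, every member qualifies, not only spaced positions as you assert for $q<q_v$.

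Take values $4,3,3,3,3,2,2,2,2$ with $k=3$ and $t=8$. The total is $3t$, so every target bundle is worth exactly $8$. Since $v(S_2)=4+3=7$ and $v(S_3)=9$, we get $q_v=3$, and $B=S_2=\{u_1,u_4\}$ must be removable. No bundle worth exactly $8$ contains a $4$ and a $3$, so no bundle of any target partition dominates $B$. Yet $U\setminus B$ splits as $(3,3,2),(3,2,2,2)$.

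Your exchange also fails on its own terms. Replacing a high good by later goods of a canonical bundle can change $|B|$ and ignores the spacing bounds, so it leaves $\mathcal F_{k,q}$. It also says nothing when $v(B)<t$.

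The paper instead proves three separate things:
\begin{enumerate}
\item When the consecutive bundle $R_{q-1}$ is acceptable, lexicographic maximality puts $B$ coordinatewise to its right. A pigeonhole exchange valid for all additive valuations (Lemma~\ref{lem:rank-reduction}) then finishes.
\item When $v(B)>t$, the rightmost rule forbids swapping a selected good for a later good worth one less. In each family this either yields a contradiction (with the rejection of $R_{q-1}$ or with feasibility of $t$) or reduces to $v(B)\le t$, after rounding $t$ up to an even target where needed.
\item When $v(B)\le t$, each exact partition inequality is verified directly with $k-1$ in place of $k$ (Lemma~\ref{lem:low-removal}). This uses slack from the total-value inequality, the prefix bounds, and the explicit value of $q_v$. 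For example, $q_v=t-(d-1)\ceil{x/k}-\ceil{a/k}$ for $\{0,1,d,d+1\}$ when the spaced bundle needs ones.
\end{enumerate}

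\textbf{Step 3 fails the same way.} Take $(4,4,3,3,1,1)$ with $k=3$ and $t=5$. Then $q_v=2$, and $S_{q_v}\cup\{u_2\}$ has values $4,4,3$. The only target partition is $(4,1),(4,1),(3,3)$, and no two of its bundles contain two fours and a three. Yet the remainder $3+1+1$ meets $t$. So your ``direct'' fallback for scarce top goods is in fact the entire proof. The paper supplies it by bounding the removed weight in each partition inequality: $w(S_{q_v}\cup\{u_2\})\le W/k+(2-1/k)M_w$ (Lemma~\ref{lem:two-weight}). This suffices whenever the target weight is at least $2M_w$, and the finitely many smaller targets are checked by hand.

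\textbf{Step 1 misstates the criteria.} For $\{0,1,2,2e\}$ the criterion cannot come from Lemma~\ref{lem:recursion}, because its hypothesis $H\ge v(U)$ fails here. For values $4,2,2,2,1$ and $k=2$ it would give MMS $7$, while the true value is $5$. The correct conditions couple the $2e$-goods to the small goods through weights such as $ra+2b+c$ and $ea+b+c$. For $\{0,1,2,3,4\}$ they are rounding rows for $t$ modulo $2,3,4$, with weights $P+Y$, $P+A$, $P+Z+A$, $P$, $P+Z$, $P+Z+Y$. In particular, for $t\equiv1\pmod4$ the row counts all positive goods, not just the non-fours.
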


Appendix~\ref{app:partitions} establishes the partition conditions. They measure not only total value but also the availability of goods needed to complete bundles. For $\{0,1,d,d+1\}$, distributing the $d$- and $(d+1)$-goods evenly minimizes the number of ones needed. For $\{0,1,2,2e\}$, the even upper value allows two corresponding deficit measures to be minimized by the same distribution. For $\{0,1,2,3,4\}$, balanced counts of nonunit goods and the useful extra value of threes and fours give the required conditions.

Appendix~\ref{app:one} uses these conditions to prove property~\ref{item:one}. The rightmost choice is important even in the lower parts. After normalizing values and rounding the target as in Section~\ref{sec:preliminaries}, if the selected value exceeds the target, replacing a selected good by a later good of value one less would preserve acceptability. The rightmost rule excludes that possibility. Together with the prefix bounds, this limits the remaining value patterns enough to retain a target partition. Appendix~\ref{app:two} proves property~\ref{item:two} by bounding the amount removed from each partition inequality, treating the small targets separately where needed.

\subsection{Completing the Proof of Theorem~\ref{thm:main}}
The two properties for the lower valuations and Theorem~\ref{thm:extension} now give the preservation required by the algorithm.

\begin{proof}[Proof of Theorem~\ref{thm:main}]
By Lemma~\ref{lem:lower-properties} and repeated application of Theorem~\ref{thm:extension}, every valuation in the theorem satisfies property~\ref{item:one}. Initially each agent has an $n$-partition at her fixed target. Removing zero-target agents without taking goods preserves the other agents' target partitions by merging parts.

Consider an iteration with $k\ge2$ agents. For each remaining agent $i$, the selected size $q$ is at most her first acceptable spaced size. Every bundle acceptable to her in $\mathcal F_{k,q}$ is also acceptable in the algorithm's search. The selected bundle is therefore at least as far right as every bundle acceptable to her in $\mathcal F_{k,q}$. Property~\ref{item:one} leaves her a $(k-1)$-partition at $t_i$. Deleting the selected goods keeps her valuation in the domain. Induction proves that every agent eventually receives a bundle meeting her fixed target, including the final agent.

For the stronger statement, fix any such agent and set $\widehat t_i=\MMS_k(v_i,U)\ge t_i$. Her first acceptable spaced size at $\widehat t_i$ is no smaller than at $t_i$, and every bundle acceptable at $\widehat t_i$ is also acceptable at $t_i$. Hence the selected bundle still meets the premises of property~\ref{item:one} at $\widehat t_i$. Applying it gives $\eqref{eq:monotonicity}$. The ordered-to-original transfer then meets all original targets. Section~\ref{sec:runtime} establishes the running-time and space bounds.
\end{proof}

\section{Efficient Implementation}\label{sec:runtime}

The algorithm need not enumerate the bundles in $\mathcal F_{k,q}$. It maximizes their selected ranks one coordinate at a time, using binary search. Testing a proposed rank through its earliest possible completion also appears in the triple selection of \citet[Algorithm~1]{huangzhou2025}; here we apply it to every coordinate and arbitrary bundle sizes. Suppose the first $j-1$ ranks are fixed, and a proposed $j$-th rank is $r$. The earliest possible completion has ranks
\begin{equation}\label{eq:completion}
 r_\ell=\max\{(\ell-1)k+1,\ r+\ell-j\},
 \quad \ell=j,\dots,q.
\end{equation}
These ranks give every agent at least as much value as any other completion of that prefix. Therefore a completion is acceptable if and only if the earliest completion is acceptable. As $r$ increases, its value to every agent weakly decreases, so binary search finds the largest admissible coordinate. Repeating the search through coordinate $q$ returns exactly the bundle specified in Algorithm~\ref{alg:exact}.

The useful feature of \eqref{eq:completion} is that it consists of a consecutive interval followed by a suffix of the spaced sequence. For each agent, maintain a binary indexed tree for sums of her values at currently present ranks. Also store the partial sums of the spaced sequence during the current iteration. A proposed completion can then be evaluated with one interval-sum query and one stored suffix sum, in $O(\log(2+m))$ operations per agent. Binary indexed trees provide interval sums and deletions in logarithmic time \citep{fenwick1994}.

Each coordinate search takes $O(k\log^2(2+m))$ operations. Every selected coordinate corresponds to a real good that is then removed, so the sum of selected bundle sizes over all iterations is at most $m$. Since $k\le n$, all coordinate searches together take $O(nm\log^2(2+m))$ operations. Maintaining the orders, forming the spaced sums, and transferring the final allocation require at most $O(nm\log(2+m))$ additional operations.

For the initial targets, first sort and group each agent's equal positive values. There are at most four distinct positive values in her lower part, so at most five possible positions for its upper end, including an empty lower part. Testing these positions and the inequalities in Definition~\ref{def:domain} takes linear time after sorting. Repeatedly apply Lemma~\ref{lem:recursion} through the upper classes, and finish with the explicit lower-part MMS formulas in Appendix~\ref{app:partitions}. This requires a constant number of operations per class; it does not search over the magnitude of a value. Total space is $O(nm)$.

Appendix~\ref{app:implementation} gives the details of recognizing the lower families and evaluating \eqref{eq:completion}, including empty and missing value classes. The stated bounds count arithmetic operations and comparisons.

\section{Discussion}\label{sec:discussion}

Our theorem allows additive tradeoffs among a few-valued lower part without requiring the higher value classes to be few in number or common across agents. The proof identifies why this combination works: an extra upper good can replace a lower target bundle, and the lower valuations bound the number of target bundles lost under additional removals. Thus the argument preserves the whole target partition, rather than only enough total value for it.

The numerical condition on upper values is important for the proposed algorithm. For two identical agents and an integer $H\ge5$, consider
\[
 H,3,3,\underbrace{1,\dots,1}_{H-2\text{ goods}}.
\]
Their MMS is $H+2$, attained by $(H+1+1,\;3+3+1^{H-4})$, where $1^{H-4}$ denotes $H-4$ unit-valued goods. The first acceptable spaced size is two. Its only acceptable value pattern is $H+3$, so Algorithm~\ref{alg:exact} leaves $H+1$ for the second agent. The lower goods lie in $\{1,3\}$ and total $H+4$. Since $H/(H+4)$ tends to one, replacing the condition in Definition~\ref{def:domain} by $H\ge(1-\varepsilon)$ times the lower total, for any fixed $\varepsilon>0$, does not retain the algorithm's guarantee. This example concerns the rule, not MMS existence: the displayed partition is itself an exact MMS allocation.

A natural next question is which other lower valuations satisfy the two partition properties in Section~\ref{sec:extension}. Any such family immediately would give an unbounded-level exact MMS domain through Theorem~\ref{thm:extension}. A separate question is whether exact MMS can be combined with envy-freeness up to any good (EFX) \citep{caragiannis2019,plautroughgarden2020} in the larger domain studied here.

\subsection*{Declaration of generative AI use}

The author used GPT-6 as a research assistant when exploring proof ideas and revising the
exposition. He checked the mathematical arguments and take full responsibility for the contents of the paper.

\bibliographystyle{plainnat}
\bibliography{bib}

\begin{thebibliography}{38}
\providecommand{\natexlab}[1]{#1}
\providecommand{\url}[1]{\texttt{#1}}
\expandafter\ifx\csname urlstyle\endcsname\relax
  \providecommand{\doi}[1]{doi: #1}\else
  \providecommand{\doi}{doi: \begingroup \urlstyle{rm}\Url}\fi

\bibitem[Amanatidis et~al.(2017)Amanatidis, Markakis, Nikzad, and Saberi]{amanatidis2017}
Georgios Amanatidis, Evangelos Markakis, Afshin Nikzad, and Amin Saberi.
\newblock Approximation algorithms for computing maximin share allocations.
\newblock \emph{ACM Transactions on Algorithms}, 13\penalty0 (4):\penalty0 52:1--52:28, 2017.

\bibitem[Amanatidis et~al.(2023)Amanatidis, Aziz, Birmpas, Filos-Ratsikas, Li, Moulin, Voudouris, and Wu]{amanatidis2023survey}
Georgios Amanatidis, Haris Aziz, Georgios Birmpas, Aris Filos-Ratsikas, Bo~Li, Herv{\'e} Moulin, Alexandros~A. Voudouris, and Xiaowei Wu.
\newblock Fair division of indivisible goods: Recent progress and open questions.
\newblock \emph{Artificial Intelligence}, 322:\penalty0 103965, 2023.

\bibitem[Atamt{\"u}rk and G{\"u}nl{\"u}k(2010)]{atamturkgunluk2010}
Alper Atamt{\"u}rk and Oktay G{\"u}nl{\"u}k.
\newblock Mingling: Mixed-integer rounding with bounds.
\newblock \emph{Mathematical Programming}, 123:\penalty0 315--338, 2010.

\bibitem[Aziz et~al.(2023)Aziz, Li, Xing, and Zhou]{aziz2023possible}
Haris Aziz, Bo~Li, Shiji Xing, and Yu~Zhou.
\newblock Possible fairness for allocating indivisible resources.
\newblock In \emph{Proceedings of the 22nd International Conference on Autonomous Agents and Multiagent Systems}, pages 197--205, 2023.

\bibitem[Barman and Krishnamurthy(2020)]{barmankrishnamurthy2020}
Siddharth Barman and Sanath~Kumar Krishnamurthy.
\newblock Approximation algorithms for maximin fair division.
\newblock \emph{ACM Transactions on Economics and Computation}, 8\penalty0 (1):\penalty0 5:1--5:28, 2020.
\newblock \doi{10.1145/3381525}.

\bibitem[Bouveret and Lema{\^\i}tre(2016)]{bouveretlemaitre2016}
Sylvain Bouveret and Michel Lema{\^\i}tre.
\newblock Characterizing conflicts in fair division of indivisible goods using a scale of criteria.
\newblock \emph{Autonomous Agents and Multi-Agent Systems}, 30\penalty0 (2):\penalty0 259--290, 2016.

\bibitem[Brandl et~al.(2026)Brandl, Suksompong, and Teh]{brandl2026binarycharacterizations}
Florian Brandl, Warut Suksompong, and Nicholas Teh.
\newblock Fair division with binary valuations: Characterizations.
\newblock In \emph{Proceedings of the 19th International Symposium on Algorithmic Game Theory (SAGT)}, 2026.
\newblock Forthcoming.

\bibitem[Budish(2011)]{budish2011}
Eric Budish.
\newblock The combinatorial assignment problem: Approximate competitive equilibrium from equal incomes.
\newblock \emph{Journal of Political Economy}, 119\penalty0 (6):\penalty0 1061--1103, 2011.

\bibitem[Byrka et~al.(2026)Byrka, Malinka, and Ponitka]{byrka2026}
Jaros{\l}aw Byrka, Franciszek Malinka, and Tomasz Ponitka.
\newblock Probing {EFX} via {PMMS}: (non-)existence results in discrete fair division.
\newblock In \emph{Proceedings of the 40th AAAI Conference on Artificial Intelligence (AAAI)}, pages 16735--16742, 2026.

\bibitem[Caragiannis et~al.(2019)Caragiannis, Kurokawa, Moulin, Procaccia, Shah, and Wang]{caragiannis2019}
Ioannis Caragiannis, David Kurokawa, Herv{\'e} Moulin, Ariel~D. Procaccia, Nisarg Shah, and Junxing Wang.
\newblock The unreasonable fairness of maximum {Nash} welfare.
\newblock \emph{ACM Transactions on Economics and Computation}, 7\penalty0 (3):\penalty0 12:1--12:32, 2019.

\bibitem[Choo et~al.(2024)Choo, Ling, Suksompong, Teh, and Zhang]{choo2024house}
Davin Choo, Yan~Hao Ling, Warut Suksompong, Nicholas Teh, and Jian Zhang.
\newblock Envy-free house allocation with minimum subsidy.
\newblock \emph{Operations Research Letters}, 54:\penalty0 107103, 2024.

\bibitem[Ebadian et~al.(2022)Ebadian, Peters, and Shah]{ebadian2022}
Soroush Ebadian, Dominik Peters, and Nisarg Shah.
\newblock How to fairly allocate easy and difficult chores.
\newblock In \emph{Proceedings of the 21st International Conference on Autonomous Agents and Multiagent Systems (AAMAS)}, pages 372--380, 2022.

\bibitem[Elkind et~al.(2024)Elkind, Igarashi, and Teh]{elkind2024chorebudgets}
Edith Elkind, Ayumi Igarashi, and Nicholas Teh.
\newblock Fair division of chores with budget constraints.
\newblock In \emph{Proceedings of the 17th International Symposium on Algorithmic Game Theory (SAGT)}, pages 55--71, 2024.

\bibitem[Federgruen and Groenevelt(1986)]{federgruengroenevelt1986}
Awi Federgruen and Henri Groenevelt.
\newblock The greedy procedure for resource allocation problems: Necessary and sufficient conditions for optimality.
\newblock \emph{Operations Research}, 34\penalty0 (6):\penalty0 909--918, 1986.

\bibitem[Feige(2022)]{feige2022}
Uriel Feige.
\newblock Maximin fair allocations with two item values, 2022.
\newblock URL \url{https://www.wisdom.weizmann.ac.il/~feige/mypapers/MMSab.pdf}.

\bibitem[Fenwick(1994)]{fenwick1994}
Peter~M. Fenwick.
\newblock A new data structure for cumulative frequency tables.
\newblock \emph{Software---Practice \& Experience}, 24\penalty0 (3):\penalty0 327--336, 1994.

\bibitem[Garg and Taki(2021)]{gargtaki2021}
Jugal Garg and Setareh Taki.
\newblock An improved approximation algorithm for maximin shares.
\newblock \emph{Artificial Intelligence}, 300:\penalty0 103547, 2021.

\bibitem[Garg et~al.(2019)Garg, McGlaughlin, and Taki]{gargetal2019}
Jugal Garg, Peter McGlaughlin, and Setareh Taki.
\newblock Approximating maximin share allocations.
\newblock In \emph{Proceedings of the 2nd Symposium on Simplicity in Algorithms (SOSA)}, pages 20:1--20:11, 2019.

\bibitem[Garg et~al.(2025)Garg, Huang, and Segal-Halevi]{garg2025chores}
Jugal Garg, Xin Huang, and Erel Segal-Halevi.
\newblock Improved maximin share approximations for chores by bin packing.
\newblock In \emph{Proceedings of the 39th AAAI Conference on Artificial Intelligence (AAAI)}, pages 13881--13888, 2025.

\bibitem[Ghodsi et~al.(2018)Ghodsi, Hajiaghayi, Seddighin, Seddighin, and Yami]{ghodsi2018}
Mohammad Ghodsi, Mohammad~Taghi Hajiaghayi, Masoud Seddighin, Saeed Seddighin, and Hadi Yami.
\newblock Fair allocation of indivisible goods: Improvements and generalizations.
\newblock In \emph{Proceedings of the 19th ACM Conference on Economics and Computation (EC)}, pages 539--556, 2018.

\bibitem[Halpern et~al.(2020)Halpern, Procaccia, Psomas, and Shah]{HalpernPrPs20}
Daniel Halpern, Ariel~D. Procaccia, Alexandros Psomas, and Nisarg Shah.
\newblock Fair division with binary valuations: One rule to rule them all.
\newblock In \emph{Proceedings of the 16th Conference on Web and Internet Economics (WINE)}, pages 370--383, 2020.
\newblock Extended version available at arXiv:2007.06073v2.

\bibitem[Heidari et~al.(2026)Heidari, Kaviani, Seddighin, and Shahrezaei]{heidari2026}
Ehsan Heidari, Alireza Kaviani, Masoud Seddighin, and AmirMohammad Shahrezaei.
\newblock Improved maximin share guarantee for additive valuations.
\newblock In \emph{Proceedings of the 2026 Annual ACM-SIAM Symposium on Discrete Algorithms (SODA)}, pages 2239--2290, 2026.

\bibitem[Hosseini et~al.(2021)Hosseini, Sikdar, Vaish, and Xia]{hosseini2021lexicographic}
Hadi Hosseini, Sujoy Sikdar, Rohit Vaish, and Lirong Xia.
\newblock Fair and efficient allocations under lexicographic preferences.
\newblock In \emph{Proceedings of the 35th AAAI Conference on Artificial Intelligence (AAAI)}, pages 5472--5480, 2021.

\bibitem[Hosseini et~al.(2024)Hosseini, Mammadov, and W{\k{a}}s]{hosseini2024weakly}
Hadi Hosseini, Aghaheybat Mammadov, and Tomasz W{\k{a}}s.
\newblock Almost envy-freeness under weakly lexicographic preferences, 2024.
\newblock arXiv:2404.19740.

\bibitem[Huang and Segal-Halevi(2023)]{huangsegal2023}
Xin Huang and Erel Segal-Halevi.
\newblock A reduction from chores allocation to job scheduling.
\newblock In \emph{Proceedings of the 24th ACM Conference on Economics and Computation (EC)}, page 908, 2023.

\bibitem[Huang and Zhou(2025)]{huangzhou2025}
Xin Huang and Shengwei Zhou.
\newblock An {FPTAS} for $7/9$-approximation to maximin share allocations, 2025.
\newblock arXiv:2511.13056.

\bibitem[Jin and Tao(2025)]{jintao2026}
Jiarong Jin and Biaoshuai Tao.
\newblock On {Pareto}-optimal and fair allocations with personalized bi-valued utilities.
\newblock In \emph{Proceedings of the 21st International Conference on Web and Internet Economics (WINE)}, pages 521--537, 2025.

\bibitem[Kurokawa et~al.(2018)Kurokawa, Procaccia, and Wang]{procacciawang2014}
David Kurokawa, Ariel~D. Procaccia, and Junxing Wang.
\newblock Fair enough: Guaranteeing approximate maximin shares.
\newblock \emph{Journal of the ACM}, 65, 2018.

\bibitem[Montanari et~al.(2025)Montanari, Schmidt-Kraepelin, Suksompong, and Teh]{montanari2025weightedenvy}
Luisa Montanari, Ulrike Schmidt-Kraepelin, Warut Suksompong, and Nicholas Teh.
\newblock Weighted envy-freeness for submodular valuations.
\newblock \emph{Social Choice and Welfare}, 65\penalty0 (4):\penalty0 983--1016, 2025.

\bibitem[Nemhauser and Wolsey(1990)]{nemhauserwolsey1990}
George~L. Nemhauser and Laurence~A. Wolsey.
\newblock A recursive procedure to generate all cuts for 0--1 mixed integer programs.
\newblock \emph{Mathematical Programming}, 46:\penalty0 379--390, 1990.

\bibitem[Neoh and Teh(2025)]{neohteh2025efx}
Tzeh~Yuan Neoh and Nicholas Teh.
\newblock Understanding {EFX} allocations: Counting and variants.
\newblock In \emph{Proceedings of the 39th AAAI Conference on Artificial Intelligence (AAAI)}, pages 14036--14044, 2025.

\bibitem[Nguyen and Rothe(2020)]{nguyenrothe2020}
Trung~Thanh Nguyen and J{\"o}rg Rothe.
\newblock Approximate {Pareto} set for fair and efficient allocation: Few agent types or few resource types.
\newblock In \emph{Proceedings of the 29th International Joint Conference on Artificial Intelligence (IJCAI)}, pages 290--296, 2020.

\bibitem[Nguyen and Rothe(2023)]{nguyenrothe2023}
Trung~Thanh Nguyen and J{\"o}rg Rothe.
\newblock Fair and efficient allocation with few agent types, few item types, or small value levels.
\newblock \emph{Artificial Intelligence}, 314:\penalty0 103820, 2023.

\bibitem[Plaut and Roughgarden(2020)]{plautroughgarden2020}
Benjamin Plaut and Tim Roughgarden.
\newblock Almost envy-freeness with general valuations.
\newblock \emph{SIAM Journal on Discrete Mathematics}, 34\penalty0 (2):\penalty0 1039--1068, 2020.

\bibitem[Scarlett et~al.(2023)Scarlett, Teh, and Zick]{scarlett2023oneandall}
Jonathan Scarlett, Nicholas Teh, and Yair Zick.
\newblock For one and all: Individual and group fairness in the allocation of indivisible goods.
\newblock In \emph{Proceedings of the 22nd International Conference on Autonomous Agents and Multiagent Systems (AAMAS)}, pages 2466--2468, 2023.
\newblock Extended abstract. Full version: arXiv:2302.06958.

\bibitem[Suksompong and Teh(2022)]{suksompongteh2022weightednash}
Warut Suksompong and Nicholas Teh.
\newblock On maximum weighted {Nash} welfare for binary valuations.
\newblock \emph{Mathematical Social Sciences}, 117:\penalty0 101--108, 2022.

\bibitem[Suksompong and Teh(2023)]{suksompongteh2023matroid}
Warut Suksompong and Nicholas Teh.
\newblock Weighted fair division with matroid-rank valuations: Monotonicity and strategyproofness.
\newblock \emph{Mathematical Social Sciences}, 126:\penalty0 48--59, 2023.

\bibitem[Zhang(2026)]{zhang2026}
Yuhao Zhang.
\newblock Exact {MMS} allocations under personalized bivalued valuations: Goods and chores, 2026.
\newblock arXiv:2608.15822.

\end{thebibliography}

\clearpage
\appendix

\section{Target Partitions and MMS Formulas for the Lower Valuations}\label{app:partitions}
This section proves the partition conditions used for the lower-valued goods and derives explicit MMS formulas. The conditions are necessary as well as sufficient; preserving them will therefore preserve an entire target partition. We fix a single valuation $v$ and use the normalization and target rounding from Section~\ref{sec:preliminaries}, so all singleton values and targets are integers. Zero-valued goods can be assigned arbitrarily. We write $(z)_+=\max\{z,0\}$.

The proofs use standard convexity arguments for balancing bundle counts and greedy allocation by decreasing marginal reduction in a deficit \citep{federgruengroenevelt1986}. We show that these choices minimize the required deficits simultaneously, and derive necessary and sufficient conditions for the three lower families.

\subsection{Values \texorpdfstring{$\{0,1,d,d+1\}$}{0, 1, d, d+1}}\label{app:partition-adj}

The normalized singleton values lie in $\{0,1,d,d+1\}$ for a fixed integer $d\ge2$. We call goods worth $d$ or $d+1$ \emph{large}. Let $a,b,c$ be the numbers of goods worth $d+1,d,1$, respectively, and let $x=a+b$
be the number of large goods.

\begin{lemma}\label{lem:partition-adj}
Let $k\ge1$ and $t\ge1$ be integers. Let $t=d\lambda+\rho=(d+1)\eta+\sigma$ and $0\le\rho<d,\quad 0\le\sigma\le d$.
The goods can be partitioned into $k$ bundles of value at least $t$ if and only if
\[
    dx+a+c\ge kt,\quad
    \rho x+a+c\ge k\rho(\lambda+1),\quad
    \sigma x+c\ge k\sigma(\eta+1).
\]
When $\rho=0$ or $\sigma=0$, the corresponding inequality is automatic.
\end{lemma}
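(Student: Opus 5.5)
We prove the two directions separately. Necessity holds bundle by bundle; sufficiency comes from an explicit balanced construction whose shortage of unit goods is compared with the three inequalities.

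\emph{Necessity.} Fix a partition into $k$ bundles of value at least $t$. For a bundle $B$, let $j_B$ be its number of large goods, $a_B$ its number of $(d+1)$-goods and $c_B$ its number of ones, so $v(B)=dj_B+a_B+c_B\ge t$. Summing this over bundles gives the first inequality. For the second, we show $\rho j_B+a_B+c_B\ge\rho(\lambda+1)$ for every bundle. This is trivial if $j_B\ge\lambda+1$. Otherwise $a_B+c_B\ge t-dj_B=d(\lambda-j_B)+\rho\ge\rho(\lambda-j_B)+\rho$, using $\rho<d$. For the third, we show $\sigma j_B+c_B\ge\sigma(\eta+1)$. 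This is trivial if $j_B\ge\eta+1$. Otherwise $v(B)\le(d+1)j_B+c_B$, so $c_B\ge(d+1)(\eta-j_B)+\sigma\ge\sigma(\eta+1-j_B)$, using $\sigma\le d$. Summing over the $k$ bundles gives the three conditions, and each is vacuous when its coefficient is $0$.

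\emph{Sufficiency.} We build the partition greedily, first placing large goods and then completing with ones. The first step is to reduce to the case where no bundle needs more than $\lambda+1$ large goods: any large good beyond that point in a bundle is wasted. We therefore cap the useful count and treat surplus large goods as unit-like contributions. This matches the $a+c$ and $c$ terms, because a $(d+1)$-good beyond the cap is at least as useful as a one.

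Next we distribute the $x$ large goods as evenly as possible, each bundle receiving $\lfloor x/k\rfloor$ or $\lceil x/k\rceil$. The deficit $(t-dj-i)_+$ of a bundle with $j$ large goods, $i$ of them worth $d+1$, is convex in $j$, so balancing minimizes total deficit; this is the Federgruen--Groenevelt greedy argument. Among bundles with the same $j$, we assign the $(d+1)$-goods so that each one reduces some positive deficit. A $(d+1)$-good fails to reduce the deficit only in a bundle that already has value at least $t$, and we fill the bundles that are still short first.

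The remaining step is to compute the total number of ones needed after this placement and show it is at most $c$ whenever the three inequalities hold. We case-split on where $x/k$ lies relative to $\eta$ and $\lambda$.
\begin{itemize}
\item If bundles hold at least $\lambda+1$ large goods, no ones are needed.
\item If bundles hold $\lambda$ or $\lambda+1$ large goods, the bundles with $\lambda$ need $\rho$ each from $(d+1)$-goods or ones. This gives the second inequality, when $\eta$ is also reached so that $(d+1)$-goods help there.
\item If bundles hold between $\eta$ and $\lambda$ large goods, the deficits are covered by $(d+1)$-goods and ones according to the third inequality, and otherwise by the first.
\end{itemize}
In each regime the shortage of ones equals $\max$ of the three slacks, namely $kt-dx-a$, $k\rho(\lambda+1)-\rho x-a$ and $k\sigma(\eta+1)-\sigma x$, minus what is available.

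I expect this last bookkeeping to be the main obstacle. The difficulty is showing that the single balanced distribution attains all three bounds at once, especially near the boundary counts $\eta$ and $\lambda$ when $\lambda=\eta$ or $\lambda=\eta+1$. I would attack it by writing the per-bundle deficit as the maximum of three affine functions of $(j,i)$ on the relevant range, each tight on one regime. Summing over the balanced distribution, each affine piece then sums to exactly one of the three slacks.
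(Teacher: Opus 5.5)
Your necessity direction is correct, and it takes a different route from the paper: you sum valid per-bundle inequalities, while the paper gets both directions at once from an exact formula for the minimum number of unit goods. Having necessity independently means sufficiency only requires one distribution that works. So you do not need the Federgruen--Groenevelt optimality claim, which as you state it ignores that the deficit depends jointly on where the $(d+1)$-goods go. The ``capping'' step is also unnecessary.

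The gap is the sufficiency computation you defer, and the attack you suggest for it runs the wrong way. Take the balanced counts: $h$ bundles with $\ell=\lfloor x/k\rfloor$ large goods and $k-h$ with $\ell+1$. You must bound $\sum_B(t-dj_B-i_B)_+$ from above by $c$. Writing each summand as a maximum of affine pieces and summing gives $\sum_B\max_m\phi_m\ge\max_m\sum_B\phi_m$. Knowing that each slack is at most $c$ therefore proves nothing unless a single piece is the maximizer in every bundle at once. Which piece that is depends on $a$, not only on $x/k$. For example, let $\ell=\lambda$ and $\rho>\lambda$, so $\eta=\lambda$ and $\sigma=\rho-\lambda$. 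Then the $\rho$-piece is tight when $a\le h\lambda$, and the $\sigma$-piece when $a>h\lambda$. Your remark that a $(d+1)$-good fails to help only in a bundle already worth $t$ hides the constraint behind this. A bundle with $N_j$ large goods holds at most $N_j$ goods worth $d+1$, and that capacity bound is exactly where the third inequality comes from.

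The missing step is an exact count of useful positions. For fixed counts $N_j$, put $D_p=\sum_j(t-pN_j)_+$. In bundle $j$, a $(d+1)$-good lowers a positive deficit in exactly $\min\{N_j,(t-dN_j)_+\}$ positions, and these numbers sum to $D_d-D_{d+1}$. Filling those positions first leaves $\max\{D_d-a,\,D_{d+1}\}$ units to be supplied by ones. For balanced counts, convexity of $y\mapsto(t-py)_+$ gives
$D_d=\max\{kt-dx,\ \rho[k(\lambda+1)-x],\ 0\}$ and
$D_{d+1}=\max\{kt-(d+1)x,\ \sigma[k(\eta+1)-x],\ 0\}$.
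Also $kt-(d+1)x\le kt-dx-a$ because $a\le x$. Hence the number of ones needed is the maximum of your three slacks and $0$, which is at most $c$ exactly under the three inequalities. This is the paper's sufficiency argument, and adding it completes your proof.
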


The first inequality requires enough total value. The other two account for the fact that large goods cannot be divided: if a bundle contains too few of them, it needs goods worth one or the extra unit from goods worth $d+1$ to reach the target. The proof shows that no further conditions are needed.

\begin{proof}
We first find the minimum number of goods worth one needed for given counts of large goods, and then choose those counts. Prescribe that bundle $j$ receives $N_j$ large goods, where $N_j\ge0$ and $\sum_{j=1}^kN_j=x$. For $p\in\{d,d+1\}$, define $D_p=\sum_{j=1}^k\pos{t-pN_j}$.
If all large goods are first counted as worth $d$, the bundles need $D_d$ further units. Each of the $a$ goods worth $d+1$ can reduce this amount by one. In bundle $j$, exactly $\min\{N_j,\pos{t-dN_j}\}$ large-good positions can reduce a positive deficit. Summed over all bundles, this number is $D_d-D_{d+1}$; after all such positions are used, the remaining deficit is $D_{d+1}$. Assigning goods worth $d+1$ to these positions first therefore reduces $D_d$ by $\min\{a,D_d-D_{d+1}\}$. Hence, for the prescribed numbers $N_j$, the minimum number of goods worth one that is needed equals
\begin{equation}\label{eq:adj-ones-fixed}
    \max\{D_d-a,D_{d+1}\}.
\end{equation}

For each $p\in\{d,d+1\}$, the function $y\mapsto\pos{t-py}$ is convex on the integers. If two bundle counts differ by at least two, moving one large good from the larger count to the smaller count cannot increase the sum of these functions. Therefore both $D_d$ and $D_{d+1}$ are minimized by the same choice: distribute the $x$ large goods so that the numbers $N_j$ differ by at most one. Since this one choice minimizes both terms in \eqref{eq:adj-ones-fixed}, it also minimizes their maximum.

For this balanced distribution,
\begin{align*}
    \min D_d
      &=\max\{kt-dx,\ \rho[k(\lambda+1)-x],\ 0\},\\
    \min D_{d+1}
      &=\max\{kt-(d+1)x,\ \sigma[k(\eta+1)-x],\ 0\}.
\end{align*}
For example, the first expression equals $kt-dx$ while $x\le k\lambda$, equals $\rho[k(\lambda+1)-x]$ while $k\lambda\le x\le k(\lambda+1)$, and is zero thereafter. The second expression follows in the same way.

Since $a\le x$, we have that
\[
    kt-(d+1)x\le kt-dx-a.
\]
Thus the minimum number of unit-valued goods needed is
\[
\max\{kt-dx-a,\ \rho[k(\lambda+1)-x]-a,\
        \sigma[k(\eta+1)-x],\ 0\}.
\]
This number is at most $c$ exactly when the three inequalities in Lemma~\ref{lem:partition-adj} hold. Any unused goods can be added arbitrarily, which proves the statement for partitions of all goods.
\end{proof}

The same balancing argument gives the MMS directly, without searching over target values.

\begin{corollary}\label{cor:mms-adj}
For the counts $a,b,c$ of goods in $U$, let $x=a+b$, $\ell=\floor{\frac{x}{k}}$, and $h=k-(x-k\ell)$.
Here $h$ counts the bundles receiving $\ell$ large goods when the counts differ by at most one. Thus $1\le h\le k$, and
\[
\boxed{
\MMS_k(v,U)=\min\left\{
\floor{\frac{(d+1)a+db+c}{k}},\quad
 d\ell+\floor{\frac{a+c}{h}},\quad
 (d+1)\ell+\floor{\frac{c}{h}}
\right\}.}
\]
\end{corollary}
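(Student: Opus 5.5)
We derive the formula from Lemma~\ref{lem:partition-adj}. By the normalization, $\MMS_k(v,U)$ is the largest integer $t\ge0$ for which a $k$-partition at target $t$ exists. For $t=0$ this is trivial, so for $t\ge1$ we use the three inequalities of the lemma. The plan has two parts: show that each of the three displayed quantities bounds every feasible $t$ from above, and show that their minimum $t^\ast$ is itself feasible.

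For the upper bounds, the first inequality $dx+a+c\ge kt$ is exactly $t\le\floor{((d+1)a+db+c)/k}$. For the other two bounds, we avoid the residues $\rho,\sigma$ and argue directly from the minimal deficits in the proof of Lemma~\ref{lem:partition-adj}. Any partition must give some bundle at most $\ell$ large goods; more precisely, at least $h$ bundles receive at most $\ell$ large goods.

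Suppose $t>d\ell+\floor{(a+c)/h}$. In the balanced distribution, the $h$ bundles with exactly $\ell$ large goods have total deficit $h(t-d\ell)>a+c$ when large goods are counted as worth $d$. Only goods worth one and the extra units of goods worth $d+1$ can cover this deficit, and there are only $a+c$ of them. The balanced distribution minimizes $D_d$ and $D_{d+1}$, so no distribution can do better. The same reasoning with each large good counted as $d+1$ gives $h(t-(d+1)\ell)>c$ whenever $t$ exceeds the third expression, and only the $c$ unit goods can cover that deficit.

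To make the bound rigorous, I would take the formulas $\min D_d=\max\{kt-dx,\rho[k(\lambda+1)-x],0\}$ and $\min D_{d+1}$ from the proof and split into cases.
\begin{itemize}
\item If $t\le d\ell$, the second term vanishes.
\item If $d\ell<t\le d(\ell+1)$, then $\lambda=\ell$, or $\lambda=\ell+1$ with $\rho=0$. In that range $k(\lambda+1)-x=h$, so the second inequality reads $\rho h\le a+c-$ (contribution absorbed), which is equivalent to $t\le d\ell+\floor{(a+c)/h}$ after using $\rho=t-d\ell$.
\item If $t>d(\ell+1)$, the first inequality already fails, because $kt>d k(\ell+1)\ge dx+$ enough. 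This step needs care, since $a+c$ can be large, so I would instead use $kt-dx-a\le c$ directly.
\end{itemize}
The analogous case split in $\eta$ handles the third inequality.

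For sufficiency, set $t=t^\ast$ and verify the three inequalities of Lemma~\ref{lem:partition-adj}, again by the case split on where $t$ lies relative to $d\ell$ and $(d+1)\ell$. If $t\le d\ell$, then $\rho[k(\lambda+1)-x]\le0$, and the second inequality is automatic. If $t\le(d+1)\ell$, the third is automatic in the same way. In the middle ranges, the inequalities reduce exactly to the bounds $t\le d\ell+\floor{(a+c)/h}$ and $t\le(d+1)\ell+\floor{c/h}$.

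The main obstacle is the regime where $t$ exceeds $d(\ell+1)$, or $(d+1)(\ell+1)$. There $\lambda\ge\ell+1$, so $k(\lambda+1)-x$ exceeds $h$ and the residue form of the inequality no longer matches the displayed term. I would handle this by showing the inequality is implied by the total-value inequality in that regime: each bundle then needs more than $\ell+1$ large-good-equivalents, and the deficit $\rho[k(\lambda+1)-x]$ is at most $kt-dx$. That comparison uses $x\le k(\ell+1)$ and $\rho<d$. With it, the minimum of the three displayed expressions is exactly the largest feasible $t$.
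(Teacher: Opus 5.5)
Your proof is correct in substance, and its necessity half is the paper's argument. For any distribution $(N_j)$ of the large goods, covering the deficits forces $a+c\ge\sum_j\pos{t-dN_j}$ and $c\ge\sum_j\pos{t-(d+1)N_j}$. By convexity these sums are at least their balanced values, which are at least $h\pos{t-d\ell}$ and $h\pos{t-(d+1)\ell}$. So your second paragraph already proves both upper bounds, and the bulleted case split after it is unnecessary.

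As written, that case split also has slips:
\begin{itemize}
\item For $\lambda=\ell+1$ one has $k(\lambda+1)-x=h+k$, not $h$. This is harmless only because $\rho=0$ there.
\item The third bullet's claim that total value fails for $t>d(\ell+1)$ holds only under your standing assumption $t>d\ell+\floor{(a+c)/h}$. The justification is $kt-dx=h(t-d\ell)+(k-h)(t-d(\ell+1))\ge h(t-d\ell)>a+c$.
\item Delete the claim that at least $h$ bundles of an arbitrary partition receive at most $\ell$ large goods. With $k=x=3$ and all large goods in one bundle, $h=3$ but only two bundles qualify. What you actually use is the convexity statement, which is true.
\end{itemize}

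The genuine difference from the paper is in sufficiency. You check the residue inequalities of Lemma~\ref{lem:partition-adj} at $t^\ast$ by locating $t^\ast$ relative to $d\ell,d(\ell+1)$ and $(d+1)\ell,(d+1)(\ell+1)$. In the top regime you dominate the residue term by total value through $(kt-dx)-\rho[k(\lambda+1)-x]=(d-\rho)(k\lambda-x)\ge0$. For the third inequality the analogue is $(kt-(d+1)x)-\sigma[k(\eta+1)-x]=(d+1-\sigma)(k\eta-x)\ge0$. After that you also need $kt-(d+1)x\le kt-dx-a\le c$, that is, $a\le x$; you should state this. With that addition your route works, and it uses the lemma's statement as a black box for sufficiency.

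The paper instead proves one identity, $h\pos{t-p\ell}+(k-h)\pos{t-p(\ell+1)}=\max\{kt-px,\ h(t-p\ell),\ 0\}$ for every positive integer $p$. It applies this to the exact criterion from the lemma's proof: balanced $D_d\le a+c$ and balanced $D_{d+1}\le c$. This gives four linear bounds on $t$ covering both directions at once. The bound $t\le\floor{((d+1)x+c)/k}$ is redundant because $dx+a+c\le(d+1)x+c$. The paper's route therefore avoids any case analysis in $\lambda$ and $\eta$. The decomposition of $kt-dx$ missing from your third bullet is exactly this identity with $p=d$.
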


\begin{proof}
In a balanced distribution of the $x$ large goods, exactly $h$ bundles receive $\ell$ large goods and the other $k-h$ bundles receive $\ell+1$. For every positive integer $p$,
\[
 h\pos{t-p\ell}+(k-h)\pos{t-p(\ell+1)}
 =\max\{kt-px,\ h(t-p\ell),\ 0\}.
\]
The proof of Lemma~\ref{lem:partition-adj} shows that a target $t$ is feasible exactly when the expression with $p=d$ is at most $a+c$ and the expression with $p=d+1$ is at most $c$. These two requirements give
\[
\begin{aligned}
 t&\le\floor{\frac{dx+a+c}{k}},
 &t&\le d\ell+\floor{\frac{a+c}{h}},\\
 t&\le\floor{\frac{(d+1)x+c}{k}},
 &t&\le(d+1)\ell+\floor{\frac{c}{h}}.
\end{aligned}
\]
The third bound is no smaller than the first because $dx+a+c\le(d+1)x+c$. The greatest feasible integer $t$ is therefore the stated minimum.
\end{proof}

\subsection{Values \texorpdfstring{$\{0,1,2,2e\}$}{0, 1, 2, 2e}}\label{app:partition-even}
This family is factored, so its MMS can already be computed in polynomial time by \citet{ebadian2022}. We give explicit partition conditions because the removal arguments below need them.

Put $H=2e\ge4$. Let $a,b,c$ be the counts of goods worth $H,2,1$, respectively, and let $x=a+b$, $V=Ha+2b+c$.

\begin{lemma} \label{lem:partition-even}
Let $k\ge1$ and $t\ge0$ be integers. Let
\[
 t=H\lambda+r,\quad 0\le r<H,\quad J=\lambda+1,\quad
 R=\ceil{r/2},\quad s=\ceil{t/2}=e\lambda+R.
\]
A $k$-partition at target $t\ge0$ exists if and only if
\begin{align*}
 V&\ge kt,\\
 W_1:=ra+2b+c&\ge krJ,\\
 W_2:=ea+b+c&\ge ks,\\
 W_3:=Ra+b+c&\ge kRJ.
\end{align*}
\end{lemma}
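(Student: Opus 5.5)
The plan is to follow the template of the proof of Lemma~\ref{lem:partition-adj}. First fix how the goods worth $H$ are spread over the bundles. Then characterize exactly when the goods worth $2$ and $1$ can cover the resulting deficits. Finally, show that one balanced spread of the $H$-goods minimizes every deficit measure at once.

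\textbf{Fixing the $H$-goods.} Prescribe that bundle $j$ receives $A_j\ge0$ goods worth $H$, with $\sum_j A_j=a$, and set $d_j=\pos{t-HA_j}$.

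\textbf{Covering deficits with twos and ones.} The first key step is to show that goods worth $2$ and $1$ can raise every bundle $j$ by at least $d_j$ if and only if
\[
 2b+c\ \ge\ \sum_j d_j
 \qquad\text{and}\qquad
 b+c\ \ge\ \sum_j \ceil{d_j/2}.
\]
Necessity:
\begin{itemize}[leftmargin=2em]
\item The first inequality is the total-value requirement.
\item For the second, any collection of twos and ones of total value at least $d_j$ has at least $\ceil{d_j/2}$ items.
\end{itemize}
For sufficiency, I would give a greedy construction. Give each bundle $\floor{d_j/2}$ twos for as long as twos last. Then use either a remaining two or a one to cover each odd remainder. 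Any remaining shortfall is filled with ones. A short case analysis, on whether the twos run out before the floors are covered, shows that the two displayed inequalities are exactly what this greedy needs. I expect this to be the main obstacle to state cleanly. The subtle case is when twos are plentiful but ones are scarce, so some odd deficits must be covered by an overshooting two. That case is precisely where the second inequality, rather than the first, is binding.

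\textbf{Balancing the $H$-goods.} Both $y\mapsto\pos{t-Hy}$ and $y\mapsto\ceil{\pos{t-Hy}/2}$ are convex on the integers. For the second, note that $H=2e$ is even, so it equals $\pos{s-ey}$ except on the last positive step, where it equals $R$; one checks directly that this is convex. By the same exchange argument as in Lemma~\ref{lem:partition-adj}, moving an $H$-good from a larger count to a smaller one never increases either sum. Hence the balanced distribution, with counts differing by at most one, minimizes both sums simultaneously. It also minimizes the conjunction of the two conditions.

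\textbf{Evaluating the balanced sums.} Evaluating exactly as in Corollary~\ref{cor:mms-adj} gives
\[
 \min\sum_j d_j=\max\{kt-Ha,\ r(kJ-a),\ 0\},
 \qquad
 \min\sum_j\ceil{d_j/2}=\max\{ks-ea,\ R(kJ-a),\ 0\}.
\]
Substituting these into the two covering conditions yields the four stated inequalities:
\begin{itemize}[leftmargin=2em]
\item $kt-Ha\le 2b+c$ gives $V\ge kt$;
\item $r(kJ-a)\le 2b+c$ gives $W_1\ge krJ$;
\item $ks-ea\le b+c$ gives $W_2\ge ks$;
\item $R(kJ-a)\le b+c$ gives $W_3\ge kRJ$.
\end{itemize}
Unused goods are added arbitrarily, which completes the partition. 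Targets with $r=0$ make the $W_1$ and $W_3$ conditions automatic, consistent with the statement.
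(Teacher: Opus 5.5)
Your proposal is correct and follows essentially the same route as the paper. You fix the counts of $H$-goods and characterize coverage of the deficits by twos and ones via $\sum_j d_j\le 2b+c$ and $\sum_j\lceil d_j/2\rceil\le b+c$, which is the paper's statement that the minimum leftover deficit $\max\{\sum_j D_j-2b,\ \sum_j\lceil D_j/2\rceil-b,\ 0\}$ is at most $c$; you then balance the $H$-goods by convexity and evaluate the balanced sums. One small correction: because $H=2e$, the identity $\lceil (t-Hy)_+/2\rceil=(s-ey)_+$ holds for every integer $y\ge0$, so there is no exceptional last step, and convexity follows at once.
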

\begin{proof}
Prescribe $N_j$ goods of value $H$ in bundle $j$, where $\sum_jN_j=a$. Its deficit is $D_j=\pos{t-HN_j}$. Distributing the $b$ two-valued goods minimizes the total remaining deficit to
\[
 \max\left\{\sum_jD_j-2b,\quad\sum_j\ceil{D_j/2}-b,\quad0\right\}.
\]
Indeed, two-valued goods have marginal deficit reductions two, then possibly one, then zero; there are $\sum_j\floor{D_j/2}$ reductions of two and $\sum_j\ceil{D_j/2}$ positive reductions in total. Since $H=2e$, $\ceil{D_j/2}=\pos{s-eN_j}$.
Both sums are discrete convex functions of the $N_j$, so balancing the $N_j$ minimizes them simultaneously: transferring a good of value $H$ from a group with at least two more such goods to a smaller group cannot increase either sum. For balanced counts, direct evaluation gives
\begin{align*}
 \sum_jD_j=\max\{kt-Ha,\ r(kJ-a),\ 0\}, \quad 
 \sum_j\ceil{D_j/2}=\max\{ks-ea,\ R(kJ-a),\ 0\}.
\end{align*}
The second identity remains valid when $R=e$, since its two nonzero expressions then coincide. Requiring the two sums to be at most $2b+c$ and $b+c$, respectively, proves the four stated inequalities.
\end{proof}

As in the preceding family, the balanced distribution also gives the MMS directly.

\begin{corollary} \label{cor:mms-even}
With $\ell=\floor{a/k}$ and $h=k-(a-k\ell)$,
\[
 \MMS_k(v,U)=\min\left\{
 \floor{V/k}, \ H\ell+\floor{(2b+c)/h},\
 2\floor{(ea+b+c)/k},\ H\ell+2\floor{(b+c)/h}
 \right\}.
\]
\end{corollary}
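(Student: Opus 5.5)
The MMS is the largest integer $t$ for which a $k$-partition at target $t$ exists. Lemma~\ref{lem:partition-even} characterizes that existence by four linear inequalities in $t$. So I would convert each inequality into an upper bound on $t$, using the balanced-count expressions from its proof, and take the minimum. Concretely, I would not use the $(\lambda,r,J,R,s)$ parametrization. Instead I would return to the two deficit sums evaluated at the balanced distribution. There $h$ bundles receive $\ell$ goods of value $H$ and $k-h$ receive $\ell+1$, with $1\le h\le k$. For every positive $p$ and every target $\tau$, the same identity as in Corollary~\ref{cor:mms-adj} gives
\[
 h\pos{\tau-p\ell}+(k-h)\pos{\tau-p(\ell+1)}=\max\{k\tau-pa,\ h(\tau-p\ell),\ 0\}.
\]
The proof of Lemma~\ref{lem:partition-even} shows that $t$ is feasible exactly when two conditions hold. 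First, the expression with $p=H$ and $\tau=t$ is at most $2b+c$. Second, the expression with $p=e$ and $\tau=s=\ceil{t/2}$ is at most $b+c$, because $\ceil{D_j/2}=\pos{s-eN_j}$.

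Unpacking the first condition gives $kt\le Ha+2b+c=V$, so $t\le\floor{V/k}$. It also gives $h(t-H\ell)\le 2b+c$, so $t\le H\ell+\floor{(2b+c)/h}$. These are the first two terms. Unpacking the second condition gives $ks\le ea+b+c$ and $h(s-e\ell)\le b+c$. In other words, $\ceil{t/2}\le\floor{(ea+b+c)/k}$ and $\ceil{t/2}\le e\ell+\floor{(b+c)/h}$. An integer $t$ satisfies $\ceil{t/2}\le M$ exactly when $t\le 2M$. So these become $t\le 2\floor{(ea+b+c)/k}$ and $t\le 2e\ell+2\floor{(b+c)/h}=H\ell+2\floor{(b+c)/h}$, which are the last two terms.

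Every step above is an equivalence, so the feasible integer targets are exactly the integers at most the stated minimum. That minimum is a nonnegative integer, and $t=0$ is always feasible. Hence $\MMS_k(v,U)$ equals the minimum.

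The only step needing care is confirming that the max-identity covers all ranges. I would check the cases $\tau\le p\ell$, $p\ell\le\tau\le p(\ell+1)$, and $\tau\ge p(\ell+1)$ directly, and also the boundary case $h=k$ (when $k\mid a$), where the $(\ell+1)$ term is absent. I would also confirm that the identity matches the $r,R,J$ forms in Lemma~\ref{lem:partition-even}, including the case $R=e$ noted there. Since that lemma already proved the equivalence with the balanced sums, this is routine bookkeeping. I expect no real obstacle beyond the ceiling-to-doubling conversion.
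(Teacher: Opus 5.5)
Your proposal is correct and is essentially the paper's proof. The paper likewise evaluates the two balanced deficit sums from the proof of Lemma~\ref{lem:partition-even} as $\max\{kt-Ha,h(t-H\ell),0\}$ and $\max\{ks-ea,h(s-e\ell),0\}$, then solves the four resulting bounds for the largest integral $t$; you additionally spell out the step it leaves implicit, that $\lceil t/2\rceil\le M$ is equivalent to $t\le 2M$, which gives the factor $2$ in the last two terms.
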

\begin{proof}
The two balanced sums in the proof of Lemma~\ref{lem:partition-even} equal
\[
 \max\{kt-Ha,h(t-H\ell),0\},\quad
 \max\{ks-ea,h(s-e\ell),0\}.
\]
Solving the four resulting bounds for the largest integral $t$ gives the stated MMS formula.
\end{proof}

\subsection{Values \texorpdfstring{$\{0,1,2,3,4\}$}{0, 1, 2, 3, 4}}\label{app:partition-five}
For this family, use the cumulative counts
\[
 P=\#\{g:v(g)\ge1\},\quad Z=\#\{g:v(g)\ge2\},\quad
 Y=\#\{g:v(g)\ge3\},\quad A=\#\{g:v(g)=4\}.
\]
Thus $P\ge Z\ge Y\ge A$ and $V=v(U)=P+Z+Y+A$.

The partition conditions and the MMS formula use the following weights. For integers $p\ge2$ and $1\le r<p$, define $w_{p,r}(z)=r\floor{z/p}+\min\{r,z\bmod p\}$.

\begin{lemma} \label{lem:partition-five}
For integers $k\ge1$ and $t\ge0$, the goods have a partition into $k$ bundles each of value at least $t$ if and only if $V\ge kt$ and all applicable rows of the following table hold:
\begin{center}
\begin{tabular}{lll}
\toprule
Condition on $t$ & Required inequality & Definition of $J$\\
\midrule
$t$ odd & $P+Y\ge k\ceil{t/2}$ & ---\\
$t\equiv1\pmod3$ & $P+A\ge kJ$ & $t=3J-2$\\
$t\equiv2\pmod3$ & $P+Z+A\ge2kJ$ & $t=3J-1$\\
$t\equiv1\pmod4$ & $P\ge kJ$ & $t=4J-3$\\
$t\equiv2\pmod4$ & $P+Z\ge2kJ$ & $t=4J-2$\\
$t\equiv3\pmod4$ & $P+Z+Y\ge3kJ$ & $t=4J-1$\\
\bottomrule
\end{tabular}
\end{center}
For each of the moduli two, three, and four, at most one row applies. Thus a fixed target requires total value and at most three additional inequalities.
\end{lemma}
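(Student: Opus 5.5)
Necessity and sufficiency are proved separately, following the pattern of Lemmas~\ref{lem:partition-adj} and~\ref{lem:partition-even}.

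\emph{Necessity.} Each row comes from a per-bundle inequality. For a bundle $B$ with cumulative counts $p,z,y,\alpha$ (so $v(B)=p+z+y+\alpha$), each row has the form $f(p,z,y,\alpha)\ge c_t$, where $f$ is the left-hand side of the row and $c_t$ is the per-bundle requirement.
\begin{itemize}
\item Odd $t$, $f=P+Y$. Every good of value $v\in\{1,2,3,4\}$ contributes $\ceil{v/2}$, and $\sum_{g\in B}\ceil{v(g)/2}\ge\ceil{v(B)/2}\ge\ceil{t/2}$.
\item $t=3J-2$, $f=P+A$. The weight $\ceil{v/3}$ equals the contribution of each good to $P+A$, giving $\sum\ceil{v/3}\ge\ceil{t/3}=J$.
\item $t=3J-1$, $f=P+Z+A$. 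Each good contributes $1,2,2,3$ for values $1,2,3,4$; this is $\ceil{2v/3}$. Then $\sum\ceil{2v/3}\ge\ceil{2t/3}=2J$, since $t\equiv2\pmod 3$.
\item $\bmod 4$ rows. The weights are $\ceil{v/4}=1$, $\ceil{v/2}$, and $\ceil{3v/4}$, which are exactly the contributions to $P$, $P+Z$, and $P+Z+Y$. The bounds $\ceil{t/4}=J$, $\ceil{t/2}=2J-1$, and $\ceil{3t/4}=3J-1$ fall one short of the stated $2kJ$ and $3kJ$ in the $t\equiv2,3\pmod 4$ cases. I would recheck the weights there.
\end{itemize}
In general I expect the right weight to be $w_{p,r}(v)$ for a suitable modulus $p$ and remainder $r=t\bmod p$ (this explains why $w_{p,r}$ is defined). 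A bundle worth at least $t=pJ-(p-r)$ then has $w$-weight at least $rJ$, because $w_{p,r}$ is superadditive-type: $w_{p,r}(v)\ge$ the minimal weight of any multiset achieving value $v$. Each row is then a mixed-integer-rounding/Chvátal cut of $\sum v\ge t$ (this explains the Atamtürk–Günlük and Nemhauser–Wolsey citations). Summing over $k$ bundles gives necessity.

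\emph{Sufficiency.} The plan is a constructive balancing argument. Each bundle receives $\floor{\cdot}$ or $\ceil{\cdot}$ of the fours, then of the threes, and so on. Equivalently, I would run a greedy that fills bundles with the largest goods, using a deficit-minimization analysis as in the earlier lemmas. Concretely, I would fix counts of goods worth $\ge3$ per bundle, balanced, and show by an exchange argument that balancing simultaneously minimizes every residual requirement on the $\{1,2\}$ goods. The $\{1,2\}$ residual problem is the $e=1$ degenerate case of Lemma~\ref{lem:partition-even}, with values $\{1,2\}$ only. Its conditions are total residual value, together with the count of goods sufficient to cover $\sum\ceil{D_j/2}$.

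Expressing those two residual conditions for balanced distributions of threes and fours should produce piecewise-linear functions of $t$. Expanding them by residues of $t$ modulo $2$, $3$, and $4$ should yield exactly the rows of the table. The statement that at most one row per modulus applies is immediate, since it is just the residue of $t$.

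\emph{Main obstacle.} The hard step is the sufficiency claim for the threes and fours. Unlike $\{d,d+1\}$, a three and a four are not interchangeable up to a unit. The optimal split of the $Y-A$ threes and the $A$ fours among bundles need not be independently balanced: pairing threes with ones versus twos interacts with the target's residue modulo $3$ and modulo $4$. I would first try to prove that some optimal partition has each bundle's counts of goods $\ge3$, and of fours, both balanced. Failing that, I would case-split on $t\bmod12$ and handle small $t$ (at most $8$) directly. For larger $t$, I would show that a bundle's "type" can be normalized to a bounded set of patterns, reducing to a finite check of the cut inequalities as a complete description of the integer hull.
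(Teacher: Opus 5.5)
Your necessity argument is close to right, but the sufficiency plan has a genuine gap: its central structural assumption is false.

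\textbf{Sufficiency.} It is not true that some target partition balances the goods worth at least three (or the fours) across bundles. Take $k=2$, $t=6$ and goods $3,3,2,2,2$. Then $V=12=kt$. The only applicable row is $t\equiv2\pmod4$ with $J=2$, and it holds because $P+Z=10\ge8$; indeed $\{3,3\},\{2,2,2\}$ is a target partition. Now give each bundle one three. Both residual deficits are $3$, so the $\{1,2\}$ residual condition needs $\sum_j\lceil D_j/2\rceil=4$ goods, but only three twos exist. Equivalently, some bundle gets at most one two and is worth at most $5$. So the conditions your reduction produces are strictly stronger than the table, and no expansion by residues can recover it.

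The fallback does not close the gap either. Since $k$ and the multiplicities are unbounded, there is no finite check. Claiming that the cuts describe the integer hull is exactly what must be proved.

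The missing idea is to balance a different count: the number $N_j$ of \emph{nonunit} goods (value at least two) in each bundle.
\begin{itemize}
\item First value every such slot at two. A four is then an upgrade of two units within its slot, and a three an upgrade of one unit.
\item Let $x_j$ be the number of fours in bundle $j$. Let $C_j$ be its deficit when its remaining slots count as two, and $G_j=(t-3N_j-x_j)_+$ its deficit if every remaining slot held a three.
\item Place fours in the marginal-benefit order $(\Delta C_j,\Delta G_j)=(2,1),(2,0),(1,0),(0,0)$. This minimizes $\sum_jC_j$ and $\sum_jG_j$ simultaneously, and the threes then fill the useful slots.
\item This gives an explicit minimum number of ones, $\max\{E_2-2a-b,\,H-a-b,\,E_3-a,\,E_4\}$. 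Total value and the table rows bound each term by the number of ones.
\end{itemize}
In this construction threes and fours are generally unbalanced. In the example above, both threes go to the bundle with fewer nonunit slots.

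\textbf{Necessity.} The fixes here are easy.
\begin{itemize}
\item For $t\equiv2\pmod4$, the weight $\lceil v/2\rceil=(1,1,2,2)$ is the $P+Y$ weight, not the $P+Z$ weight $(1,2,2,2)=w_{4,2}$. The latter is not $\lceil\alpha v\rceil$ for any single $\alpha$, so you genuinely need $w_{4,2}$ for this row.
\item For $t\equiv3\pmod4$, your weight $\lceil 3v/4\rceil$ does work. You made an arithmetic slip: $\lceil 3(4J-1)/4\rceil=3J$, not $3J-1$.
\item The property of $w_{p,r}$ you invoke points the wrong way. You need $w_{p,r}$ nondecreasing and \emph{subadditive}, so that $w_{p,r}(t)\le w_{p,r}(v(B))\le\sum_{g\in B}w_{p,r}(v(g))$ whenever $v(B)\ge t$. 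Combined with $w_{p,r}(t)=r\lceil t/p\rceil$ for $r=t\bmod p>0$, this gives all six rows. The paper verifies subadditivity by reducing both arguments modulo $p$ and splitting on whether a carry occurs.
\end{itemize}
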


\begin{proof}
We first prove necessity using the weights above, then construct a partition from the stated inequalities.

\paragraph{Necessity.}
The weights $w_{p,r}$ come from the standard rounding inequalities for integer programming \citep{nemhauserwolsey1990,atamturkgunluk2010}. The function is nondecreasing and subadditive on nonnegative integers; we give the verification for completeness. To check subadditivity, reduce both arguments modulo $p$. Without a carry it is the subadditivity of $\min\{r,\cdot\}$. With a carry, for residues $a+b\ge p$, the required inequality is
\[
 r+\min\{r,a+b-p\}\le\min\{r,a\}+\min\{r,b\}.
\]
If both residues are at least $r$, the right side is $2r$; if one is below $r$, the inequality follows from the other residue being at most $p-1$. If both are below $r$, the right side is $a+b$, and $r+a+b-p\le a+b$. These cases prove the claim. Therefore every bundle of value at least $t$ has total $w_{p,r}$-weight at least $w_{p,r}(t)$. For $r=t\bmod p>0$, this is $r\ceil{t/p}$. Evaluating these weights on values $1,2,3,4$, for $p=2,3,4$, gives the displayed rows.

\paragraph{Sufficiency: minimizing the number of ones.}
Let $a,b,c,u$ count fours, threes, twos, and ones, respectively, so $Z=a+b+c$, $Y=a+b$, $A=a$, and $P=Z+u$. Prescribe balanced numbers $N_1,\dots,N_k$ of nonunit positive goods: they sum to $Z$ and differ by at most one. Initially value every such position at two. Let
\[
 E_p=\sum_{j=1}^k\pos{t-pN_j}\quad \text{for }p=2,3,4,
 \quad
 H=\sum_{j=1}^k\max\{t-3N_j,\ceil{t/2}-N_j,0\}.
\]
We first establish that the minimum number of ones needed, for these prescribed positions, is
\begin{equation}\label{eq:five-minones}
 \max\{E_2-2a-b,\ H-a-b,\ E_3-a,\ E_4\}.
\end{equation}

If bundle $j$ receives $x_j$ fours, its remaining deficit before threes is $C_j(x_j)=\pos{t-2N_j-2x_j}$.
At most $N_j-x_j$ threes can be placed there. After distributing all $b$ threes, the least remaining total deficit is
\[
 \max\left\{\sum_j C_j(x_j)-b,\ \sum_j G_j(x_j)\right\},
 \quad G_j(x_j)=\pos{t-3N_j-x_j}.
\]
Indeed, each useful three reduces the deficit by one, and there are exactly $C_j(x_j)-G_j(x_j)$ useful available positions in group $j$.

A successive four has one of the marginal benefit pairs
\[
 (\Delta C_j,\Delta G_j)=(2,1),(2,0),(1,0),(0,0),
\]
in that order. A decrease in $G_j$ always gives a decrease of two in $C_j$: if $x_j<N_j$ and $t-3N_j-x_j\ge1$, then $t-2N_j-2x_j\ge2$. Thus assigning fours in the displayed priority order simultaneously minimizes both sums. If $F=\sum_j\min\{N_j,\floor{\pos{t-2N_j}/2}\}$,
then $H=E_2-F$, and the two minima are
\[
 \min\sum_j C_j=\max\{E_2-2a,H-a,E_4\},\quad
 \min\sum_j G_j=\max\{E_3-a,E_4\}.
\]
Here $F$ counts the reductions of two in the total deficit, $H=E_2-F$ accounts for the subsequent reductions of one, and $E_4$ is the deficit when every position is a four. This proves \eqref{eq:five-minones}. Any excess fours or threes occupy unused positions; there are enough because $a+b\le Z$.

\paragraph{Sufficiency: bounding the deficits.}
It remains to show that every term in \eqref{eq:five-minones} is at most $u$. For balanced $N_j$, with $r_p=t\bmod p$, direct evaluation at the two possible sizes gives
\begin{equation}\label{eq:five-Ep}
 E_p=\max\{kt-pZ,\ r_p[k\ceil{t/p}-Z],\ 0\}.
\end{equation}
Also, putting $s=\ceil{t/2}$ and $h=\ceil{3t/4}$, the same evaluation gives
\begin{equation}\label{eq:five-H}
 H=\max\{kt-3Z,\ kh-2Z,\ ks-Z,\ 0\}.
\end{equation}
For completeness, the summand defining $H$ has successive discrete slopes $-3$, possibly $-2$, $-1$, and $0$. The line of slope $-2$ is needed only when $t\equiv2,3\pmod4$, and its intercept is $\ceil{3t/4}$. This proves \eqref{eq:five-H} for balanced sizes, including a size zero.

Total value gives $kt-2Z\le2a+b+u$. If $t$ is odd, the parity row gives $k\ceil{t/2}-Z\le a+b+u$. Hence \eqref{eq:five-Ep} gives $E_2\le2a+b+u$.

Since $V=3Z+a-c+u\le3Z+a+u$, total value gives $kt-3Z\le a+u$. Each applicable modulus-three row is precisely
$r_3[k\ceil{t/3}-Z]\le a+u$. Thus $E_3\le a+u$.

Since $V\le4Z+u$, total value gives $kt-4Z\le u$. The modulus-four rows give $r_4[k\ceil{t/4}-Z]\le u$; for $r_4=3$ the stated row is even stronger by $c$. Thus $E_4\le u$.

To bound $H$, its first term is at most $a+u$. Its third term is at most $a+b+u$, by the parity row when $t$ is odd and by half the total-value inequality when $t$ is even. When $t\equiv0,1\pmod4$, the second term in \eqref{eq:five-H} is the average of the first and third. When $t=4\ell+2$, it is the average of $kt-2Z$ and $2k(\ell+1)-2Z$, which are at most $2a+b+u$ and $u$, respectively. When $t=4\ell+3$, the modulus-four row says directly that $3k(\ell+1)-2Z\le a+b+u$. Hence $H\le a+b+u$ in every case. Formula \eqref{eq:five-minones} is at most $u$, so the required partition exists.
\end{proof}

\begin{corollary} \label{cor:mms-five}
For $p=2,3,4$ and $1\le r<p$, set $W_{p,r}=\sum_g w_{p,r}(v(g))$, $M_{p,r}=\floor{W_{p,r}/k}$, and $L_{p,r}=p\floor{M_{p,r}/r}+(M_{p,r}\bmod r)$.
Then
\[
 \MMS_k(v,U)=\min\left\{\floor{V/k},\ L_{2,1},L_{3,1},L_{3,2},
 L_{4,1},L_{4,2},L_{4,3}\right\}.
\]
The six weights, in the order appearing here, are
$P+Y$, $P+A$, $P+Z+A$, $P$, $P+Z$, and $P+Z+Y$.
\end{corollary}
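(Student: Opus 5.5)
The MMS equals the largest integer $t$ that admits a $k$-partition, so I will combine Lemma~\ref{lem:partition-five} with the fact that the condition is downward closed in $t$. First, monotonicity. If a $k$-partition at target $t$ exists, then one at target $t-1$ also exists. Hence the set of feasible integer targets is an initial segment $\{0,1,\dots,T\}$ with $T=\MMS_k(v,U)$. It therefore suffices to show that, for each condition in the lemma, the set of integers $t$ satisfying it (all applicable rows together) is exactly $\{t\le \text{bound}\}$. The bounds are $\floor{V/k}$ for total value and the corresponding $L_{p,r}$ for each weight.

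Second, I recast each table row as a single weight inequality. The necessity part of the proof of Lemma~\ref{lem:partition-five} shows that a bundle of value at least $t$ has $w_{p,r}$-weight at least $w_{p,r}(t)$. Summing over the $k$ bundles gives $W_{p,r}\ge k\,w_{p,r}(t)$ for every $p\in\{2,3,4\}$ and $1\le r<p$, whether or not $r=t\bmod p$. When $r=t\bmod p$, this inequality is exactly the tabulated row, since $w_{p,r}(t)=r\ceil{t/p}$ and $W_{p,r}$ evaluates to the listed weight. Evaluating $w_{p,r}$ on $1,2,3,4$ yields the six weights in the stated order. Examples: $w_{2,1}=(1,1,2,2)$ gives $P+Y$, and $w_{4,3}=(1,2,3,3)$ gives $P+Z+Y$. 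This is a routine check. Consequently $t$ is feasible if and only if $V\ge kt$ and $W_{p,r}\ge k\,w_{p,r}(t)$ for all six pairs. The extra inequalities for non-matching residues are implied by feasibility through necessity, so adding them changes nothing.

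Third, I solve each inequality for $t$. Since $w_{p,r}$ is nondecreasing and integer-valued, $k\,w_{p,r}(t)\le W_{p,r}$ is equivalent to $w_{p,r}(t)\le M_{p,r}$. The largest $t$ with $w_{p,r}(t)\le M$ is $p\floor{M/r}+(M\bmod r)$. To see this, write $M=r\alpha+\beta$ with $0\le\beta<r$. Then $t=p\alpha+\beta$ has weight $r\alpha+\beta=M$. The next integer $t+1=p\alpha+\beta+1$ has residue $\beta+1\le r$ (and $\beta+1<p$), so its weight is $M+1$. Monotonicity then makes the solution set exactly $\{t\le L_{p,r}\}$. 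Together with $t\le\floor{V/k}$, the maximum feasible $t$ is the stated minimum.

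The main point requiring care is the second step. The table imposes only the row with $r=t\bmod p$, yet the formula takes a minimum over all $r$. So I must make sure that the non-matching inequalities are genuinely necessary, and that the reduction of the feasible set to an initial segment is legitimate. The necessity argument covers all $r$, because subadditivity of $w_{p,r}$ was proved for arbitrary $1\le r<p$. If that fails, the fallback is monotonicity of feasibility: the minimum over the thresholds $L_{p,r}$ equals the first $t$ at which some applicable row fails.
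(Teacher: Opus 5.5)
Your proof is correct and follows the paper's argument. Necessity of $W_{p,r}\ge k\,w_{p,r}(t)$ for every pair $(p,r)$, via subadditivity and monotonicity of $w_{p,r}$, gives the upper bound; at $t$ equal to the minimum, the matching-residue rows hold, so the sufficiency part of Lemma~\ref{lem:partition-five} gives the partition. You also verify the threshold $p\floor{M/r}+(M\bmod r)$ explicitly, which the paper only asserts, and your closing fallback is unnecessary because subadditivity was established for every $1\le r<p$.
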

\begin{proof}
Every target partition satisfies $W_{p,r}\ge k w_{p,r}(t)$ for every fixed $(p,r)$ by the necessity argument above. The largest nonnegative integer $t$ with $w_{p,r}(t)\le M$ is $p\floor{M/r}+(M\bmod r)$. Thus the displayed minimum is an upper bound on MMS. Conversely, taking $t$ equal to this minimum satisfies total value and, in particular, all the applicable inequalities of Lemma~\ref{lem:partition-five}. Sufficiency proves equality.
\end{proof}

\section{Proof of Lemma~\ref{lem:lower-properties}}

\subsection{Proof of Lemma~\ref{lem:lower-properties}: Property \ref{item:one}}\label{app:one}
We prove property~\ref{item:one} from Section~\ref{sec:extension} for the lower valuations. Throughout, $v$ is ordered, $k\ge2$, and the positive integral target $t$ admits a $k$-partition of $U$. Let $q_v$ be the first acceptable spaced size. Lemma~\ref{lem:rank-reduction} first gives an exchange valid for all additive valuations. Lemma~\ref{lem:low-removal} then treats bundles worth at most $t$ for our three lower families. The final subsection uses the rightmost comparison to reduce property~\ref{item:one} to these two lemmas.

\subsubsection{An exchange for consecutive ranks}
For $1\le q\le q_v$, define the consecutive bundle $R_{q-1}=\{u_{(q-1)(k-1)+1},\dots,u_{(q-1)k+1}\}$.
It contains $q$ goods and belongs to $\mathcal F_{k,q}$.

The following lemma uses the standard pigeonhole and exchange argument for rank-based reductions \citep{gargtaki2021,heidari2026}. We state it for arbitrary bundle size.

\begin{lemma}\label{lem:rank-reduction}
Suppose an agent can partition $U$ into $k\ge2$ bundles of value at least $t$. After removing $R_{q-1}$, the remaining goods can be partitioned into $k-1$ bundles of value at least $t$.
\end{lemma}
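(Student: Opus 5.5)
The plan is a single pigeonhole-and-exchange argument applied to one fixed target partition. No property of the lower valuations is used; it works for every ordered additive valuation. Write $X=(q-1)k+1$. Then $R_{q-1}$ consists of the $q$ consecutive ranks $X-q+1,\dots,X$, that is, the last $q$ goods among the top $X$ ranks. If $X>|U|$, the missing ranks are the zero-valued dummy goods of Section~\ref{sec:preliminaries}; we add them to arbitrary bundles of the target partition, which changes no bundle value.

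\emph{Step 1 (pigeonhole).} Fix a partition $P_1,\dots,P_k$ of $U$ with $v(P_j)\ge t$ for every $j$. The $X=(q-1)k+1$ goods $u_1,\dots,u_X$ are spread over $k$ bundles. Hence some bundle, say $P=P_{j^*}$, contains at least $q$ of them. Choose a set $T\subseteq P\cap\{u_1,\dots,u_X\}$ with $|T|=q$ and $P\cap R_{q-1}\subseteq T$. This is possible because $|P\cap R_{q-1}|\le q\le|P\cap\{u_1,\dots,u_X\}|$.

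\emph{Step 2 (matching).} Every good of $T\setminus R_{q-1}$ has rank at most $X-q$. It therefore ranks strictly before every good of $R_{q-1}$, so in the ordered instance its value is at least that of any good in $R_{q-1}$. Since $|T|=|R_{q-1}|=q$ and $T\cap R_{q-1}=P\cap R_{q-1}$, the sets $R_{q-1}\setminus P$ and $T\setminus R_{q-1}$ have the same size. Fix any bijection $\phi$ from $R_{q-1}\setminus P$ onto $T\setminus R_{q-1}$. Then $v(\phi(g))\ge v(g)$ for each $g$.

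\emph{Step 3 (exchange).} Delete the bundle $P$ and the goods of $R_{q-1}$. In each other bundle $P_j$ with $j\ne j^*$, replace every $g\in P_j\cap R_{q-1}$ by $\phi(g)$. The images are distinct goods of $P$, so the $k-1$ new bundles are disjoint. They avoid $R_{q-1}$, and each has value at least $v(P_j)\ge t$. The goods of $P\setminus(T\cup R_{q-1})$, together with any unused goods, are added to an arbitrary bundle. This can only increase its value. The result is a $(k-1)$-partition of $U\setminus R_{q-1}$ at target $t$.

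The only delicate step is the counting in Step~1. The choice of $X=(q-1)k+1$ is exactly what forces one bundle to meet the top $X$ ranks in $q$ goods. The requirement $P\cap R_{q-1}\subseteq T$ makes the bijection of Step~2 well defined. I also expect to check briefly that the hypothesis $q\le q_v$ is not needed here, beyond ensuring that $R_{q-1}$ lies within the padded rank range.
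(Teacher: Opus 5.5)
Your proof is correct and follows the paper's argument: a pigeonhole count on the first $(q-1)k+1$ ranks finds a bundle containing $q$ of them, those goods are exchanged for the goods of $R_{q-1}$ in the other bundles, and that bundle is dissolved. Choosing $T\supseteq P\cap R_{q-1}$, and noting that every good of $T\setminus R_{q-1}$ precedes all of $R_{q-1}$ so any bijection works, is a slightly cleaner version of the paper's rank-order matching; just remember to discard the leftover zero-valued dummy goods at the end.
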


\begin{proof}
Add any required zero-valued dummy goods to the given partition. Among the first $(q-1)k+1$ goods, some bundle contains at least $q$ goods. List any $q$ such goods in rank order. Its $j$-th listed good has rank at most $(q-1)(k-1)+j$, which is the rank of the $j$-th good of $R_{q-1}$. Each listed good is therefore worth at least as much as the corresponding good of $R_{q-1}$. Cancel any goods common to the two sets and match the remaining goods in rank order. Move the goods of $R_{q-1}$ into the chosen partition bundle and use the matched listed goods to replace them in the other bundles; no other bundle decreases in value. Delete the chosen bundle and distribute its remaining goods among the other $k-1$ bundles. Discarding dummy goods does not change any value.
\end{proof}

\subsubsection{Bundles worth at most the target}
\begin{lemma}\label{lem:low-removal}
Suppose $v$ takes singleton values in one of the sets in \eqref{eq:lower-values}. If $B\in\mathcal F_{k,q}$ for $q\le q_v$, $v(R_{q-1})<t$, and $v(B)\le t$, then $U\setminus B$ admits a $(k-1)$-partition at $t$.
\end{lemma}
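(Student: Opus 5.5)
The approach is to verify, for the remaining goods $U\setminus B$ and for $k-1$ bundles, the explicit necessary-and-sufficient conditions of Lemmas~\ref{lem:partition-adj}, \ref{lem:partition-even} and~\ref{lem:partition-five}. Each condition has the form $W\ge k\,c(t)$. Here $W=\sum_g w(v(g))$ is an additive weight: total value, or a subadditive weight $w_{p,r}$, or in the first two families one of the weighted counts such as $\rho x+a+c$ or $W_1,W_2,W_3$. In every case $w(t)\le c(t)$, with equality for the weights of Lemma~\ref{lem:partition-five}, and $c(t)$ is the per-bundle requirement. Since $U$ satisfies $W(U)\ge k\,c(t)$, it suffices to show $W(B)\le c(t)$ for each applicable weight. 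Then
\[
W(U\setminus B)\ge (k-1)\,c(t).
\]
For total value this is exactly the hypothesis $v(B)\le t$.

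The bulk of the work is to bound the weight of $B$ for the nonlinear weights. First I would record what the hypotheses say about the shape of $B$. From $q\le q_v$, the spaced bundle $S_{q-1}$ is not acceptable, and $R_{q-1}$ is also unacceptable. Together with the rank bounds $r_j\ge(j-1)k+1$, this means the $j$-th good of $B$ is worth at most the $j$-th good of $S_q$ and, for $j\ge2$, at most that of $R_{q-1}$. Since $v(B)\le t$, the weight $w(v(g))$ is at most proportional to $v(g)$ with the worst ratio attained on small goods. Hence $W(B)$ can exceed $c(t)$ only if $B$ contains many small goods while its value is close to $t$.

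Subadditivity of $w_{p,r}$ gives $w(v(B))\le\sum_g w(v(g))$, which points the wrong way, so I cannot conclude directly. Instead I would argue family by family. For $\{0,1,d,d+1\}$, the weight $\rho x+a+c$ charges each large good $\rho$ or $\rho+1$ and each unit good $1$. If $B$ has $y$ large goods and value at most $t=d\lambda+\rho$, then $y\le\lambda$. When $y=\lambda$ the unit goods are few and the bound holds. When $y<\lambda$, the weight is at most $v(B)\le t$, but this must be compared with $\rho(\lambda+1)$, which can be smaller than $t$.

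This is where $v(R_{q-1})<t$ is needed. Because $R_{q-1}$ consists of $q$ consecutive goods at ranks no earlier than those of $B$ beyond the first, and the first good of $B$ is at most $u_1$, a bundle $B$ containing units but few large goods forces $R_{q-1}$, and hence the suffix from rank about $(q-1)(k-1)$ onward, to be mostly units. That restricts how many large goods remain in $U$ after $B$. In that situation I would check the condition on $U\setminus B$ directly, using the fact that the original partition of $U$ needed at least about $\lambda$ large goods per bundle in all but few bundles. The analogous analyses apply to $W_1,W_2,W_3$ in the even family, and to the rows $P+Y$, $P+A$, $P+Z+A$, $P$, $P+Z$, $P+Z+Y$ in the $\{0,1,2,3,4\}$ family. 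Each row is handled by expressing its weight as value plus a correction that counts low goods, and bounding that correction via the ordering.

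I expect the main obstacle to be the case where $B$ has value strictly below $t$ but a disproportionate number of unit goods, so that a count-type inequality such as $P\ge kJ$ (or $\sigma x+c\ge k\sigma(\eta+1)$) loses more than one bundle's worth. My first attempt there is an exchange argument in the style of Lemma~\ref{lem:rank-reduction}. I would compare $B$ with $R_{q-1}$ coordinatewise, show that $R_{q-1}$ removes at least as much of every weight as $B$ would need, and transfer the partition obtained from Lemma~\ref{lem:rank-reduction}. If that comparison fails, I would fall back on a direct case split by the residue of $t$ modulo $p$, for each applicable row separately.
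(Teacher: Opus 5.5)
Your top-level plan is the paper's: check the exact criteria of Lemmas~\ref{lem:partition-adj}, \ref{lem:partition-even} and~\ref{lem:partition-five} for $U\setminus B$ with $k-1$ bundles. However, the tools you propose for carrying it out fail, so the nonlinear rows are never actually handled.

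The reduction to $W(B)\le c(t)$ is false in general. Take values in $\{0,1,2\}$ (so $d=2$), $k=3$, four goods worth $2$, thirteen worth $1$, and $t=7$. The bundles $(2,2,1,1,1)$, $(2,2,1,1,1)$, $(1^7)$ show feasibility, and $q_v=5$. The bundle $B=S_4=\{u_1,u_4,u_7,u_{10}\}$, with values $2,2,1,1$, meets every hypothesis: $q=4\le q_v$, $v(R_3)=4<7$, and $v(B)=6\le 7$. Since $7=3\cdot 2+1$, the third criterion $\sigma x+c\ge k\sigma(\eta+1)$ requires $3$ per bundle, yet $B$ removes weight $4$. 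The row survives only through the slack in the original weight $17\ge 9$.

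The same example refutes your shape claim. For $j<q$, the bound $r_j\ge(j-1)k+1$ is earlier than the $j$-th rank $(q-1)(k-1)+j$ of $R_{q-1}$ by $(q-j)(k-1)$. So $B$ can beat $R_{q-1}$ coordinatewise; here $2,2,1,1$ beats $1,1,1,1$. For the same reason your fallback exchange cannot work. The partition of $U\setminus R_{q-1}$ from Lemma~\ref{lem:rank-reduction} transfers to $U\setminus B$ only when $B$ lies to the right of $R_{q-1}$. This lemma is needed precisely in the case where $R_{q-1}$ is rejected and no such domination is available.

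The missing idea is to pay for an oversized removal from the slack of the original inequalities, chiefly total value, using the exact size of $q_v$. When the spaced target needs unit goods, the paper computes $q_v=t-(d-1)\bar x-\bar a$ in \eqref{eq:adj-qv}, so $c_B\le q_v-x_B$. Substituting this and the prefix bounds $a_B\le\bar a$, $x_B\le\bar x$ into $dx+a+c\ge kt$ rearranges to two parts:
\begin{itemize}
\item $(k-1)$ times a quantity at least $\sigma(\eta+1)$ (or $\rho(\lambda+1)$);
\item nonnegative ceiling-deficit terms such as $(d-\sigma)(k\bar x-x)$ and $k\bar a-a$.
\end{itemize}
In the example this reads $13\ge 2\cdot 5+2$ with $5\ge 3$.

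When only large goods are needed, the count rows follow instead from $X-\ceil{X/k}\ge(k-1)J$ for prefixes with $X\ge kJ$, together with the definition of $q_v$. The even family splits by how the spaced target is reached. The five-valued family needs the rounding constraint \eqref{eq:five-rounding} together with rejection of $S_{p-1}$; the hypothesis $v(R_{q-1})<t$ enters only in a few subcases.

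Your remark about checking $U\setminus B$ directly points this way. But this bookkeeping is the content of the proof, and it is absent from the proposal.
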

\begin{proof}
For $k=2$, the remaining total value is at least $t$, so assume $k\ge3$. We verify the exact partition conditions from Appendix~\ref{app:partitions} separately for each family.

\subsubsection*{Values $\{0,1,d,d+1\}$}

Use the counts $a,b,c$ and $x=a+b$ from Appendix~\ref{app:partition-adj}. Let $t=d\lambda+\rho=(d+1)\eta+\sigma$ and $\quad 0\le\rho<d,\quad 0\le\sigma\le d$.
Let $a_B$ be the number of goods worth $d+1$ in $B$, let $x_B$ be the number of large goods in $B$, and let $c_B$ be the number worth one.

The selected ranks $r_1<\cdots<r_q$ satisfy $r_j\ge(j-1)k+1$ for all $j = 1,\dots, q$.
Consequently, at most $\ceil{X/k}$ selected goods lie in any prefix of $X$ goods. Define $\bar a=\ceil{\frac ak}$ and $\bar x=\ceil{\frac xk}$.
Since the goods worth $d+1$ and the large goods form prefixes, and since $q\le q_v$, we have $a_B\le\bar a$, $x_B\le\bar x$, and $x_B+c_B\le q_v$.
The value assumption gives us
\begin{equation}\label{eq:adj-removed-value}
    dx_B+a_B+c_B=v(B)\le t.
\end{equation}
We verify the three inequalities of Lemma~\ref{lem:partition-adj} for the remaining counts. The total-value inequality follows immediately from \eqref{eq:adj-removed-value}.

We will repeatedly use a consequence of this count bound: for nonnegative integers $X,J$ with $X\ge kJ$, at least $X-\ceil{X/k}\ge(k-1)J$ goods remain in a prefix of length $X$. We now distinguish whether the agent needs goods worth one to reach her target in the spaced bundle.

\paragraph{The spaced target uses unit-valued goods.}
Suppose
\begin{equation}\label{eq:adj-needs-ones}
    d\bar x+\bar a<t.
\end{equation}
The full spaced sequence contains $\bar a$ goods worth $d+1$ and $\bar x$ large goods. Every subsequent positive good in that sequence is worth one. Therefore
\begin{equation}\label{eq:adj-qv}
    q_v=t-(d-1)\bar x-\bar a.
\end{equation}

Assume $\rho>0$. Use the original total-value inequality from Lemma~\ref{lem:partition-adj}, bound $c_B$ by $q_v-x_B$, and substitute \eqref{eq:adj-qv}. Rearranging gives us
\begin{align*}
&\rho(x-x_B)+(a-a_B)+(c-c_B) \\
&\quad\ge
(k-1)[t-(d-\rho)\bar x]
 +(d-\rho)(k\bar x-x)
 +(\rho-1)(\bar x-x_B)
 +(\bar a-a_B).
\end{align*}
Every term after the first is nonnegative. Moreover, \eqref{eq:adj-needs-ones} implies $\bar x\le\lambda$, and hence $t-(d-\rho)\bar x\ge\rho(\lambda+1)$.
Thus the second partition inequality holds after the allocation with $k-1$ in place of $k$.

Assume $\sigma>0$. The same substitution and rearrangement give
\begin{align*}
&\sigma(x-x_B)+(c-c_B) \\
&\quad\ge
(k-1)[t-(d-\sigma)\bar x-\bar a]
 +(d-\sigma)(k\bar x-x)
 +(k\bar a-a)
 +(\sigma-1)(\bar x-x_B).
\end{align*}
Again, every term after the first is nonnegative. If $\bar x\le\eta$, then $\bar a\le\bar x$ and
\[
 t-(d-\sigma)\bar x-\bar a
 \ge t-(d+1-\sigma)\eta
 =\sigma(\eta+1).
\]
If $\bar x\ge\eta+1$, then \eqref{eq:adj-needs-ones} gives
\[
 t-(d-\sigma)\bar x-\bar a
 =t-d\bar x-\bar a+\sigma\bar x
 \ge\sigma(\eta+1).
\]
Therefore the third partition inequality is also preserved.

\paragraph{The spaced target uses only large goods.}
Now suppose $d\bar x+\bar a\ge t$. Then $q_v\le\bar x$.

First consider the third partition inequality, assuming $\sigma>0$. If $\bar x\le\eta+1$, the removed amount from its left-hand side is at most
\[
    \sigma x_B+c_B
    \le\sigma(x_B+c_B)
    \le\sigma q_v
    \le\sigma(\eta+1).
\]
If $\bar x\ge\eta+2$, then the prefix count bound gives $x-x_B\ge(k-1)(\eta+1)$,
which proves the required inequality directly.

It remains to consider the second partition inequality, assuming $\rho>0$. If $x-x_B\le(k-1)\lambda$,
the new total-value inequality gives
\[
\rho(x-x_B)+(a-a_B)+(c-c_B)
\ge(k-1)t-(d-\rho)(x-x_B)
\ge(k-1)\rho(\lambda+1).
\]
If $x-x_B\ge(k-1)(\lambda+1)$, the first term already gives the required bound. We may therefore assume
\begin{equation}\label{eq:adj-middle-count}
    (k-1)\lambda<x-x_B<(k-1)(\lambda+1).
\end{equation}
Equation~\eqref{eq:adj-removed-value} implies $x_B\le\lambda$. If $x_B=\lambda$, it also gives $a_B+c_B\le\rho$, so the removed amount from the left-hand side of the second partition inequality is at most $\rho(\lambda+1)$.

Suppose $x_B\le\lambda-1$. If $a\ge k\rho$, the prefix count bound gives $a-a_B\ge(k-1)\rho$, which together with \eqref{eq:adj-middle-count} proves the desired bound. We may therefore assume $a<k\rho$, and hence $\bar a\le\rho$. Equation~\eqref{eq:adj-middle-count} and the same prefix count bound imply $\bar x\le\lambda+1$. Thus
\[
    x_B+c_B\le q_v\le\bar x\le\lambda+1.
\]
If $x_B+c_B=\lambda+1$, then $a_B\le\rho-1$. Indeed, otherwise $a_B=\bar a=\rho\le x_B\le\lambda-1$ and $q_v=\bar x=\lambda+1$. The first $\lambda$ goods of the spaced sequence would then contain $\rho$ goods worth $d+1$ and $\lambda-\rho$ goods worth $d$, and hence would have value $d\lambda+\rho=t$, contradicting the definition of $q_v$. Consequently, in all cases, $a_B+x_B+c_B\le\rho+\lambda$.
The amount removed from the left-hand side of the second partition inequality is therefore at most
\[
\rho x_B+a_B+c_B
=(\rho-1)x_B+(a_B+x_B+c_B)
\le(\rho-1)(\lambda-1)+\rho+\lambda
=\rho\lambda+1
\le\rho(\lambda+1).
\]
This proves the last required inequality. Lemma~\ref{lem:partition-adj} now gives a partition of the remaining goods into $k-1$ bundles of value at least $t$.

\subsubsection*{Values $\{0,1,2,2e\}$}
Use $H=2e$ and the counts, weights, and target remainders from Appendix~\ref{app:partition-even}; in particular, $x=a+b$. Put $\alpha=\ceil{a/k}$ and $\zeta=\ceil{x/k}$. Let $a_B,x_B,p_B$ count the goods in $B$ worth $H$, at least two, and at least one, respectively. Then
\[
 a_B\le\alpha,\quad x_B\le\zeta,\quad p_B\le q\le q_v,\quad
 p_B+x_B+(H-2)a_B\le t.
\]
Set $h_a=k\alpha-a\ge0$ and $h_x=k\zeta-x\ge0$, and write primes for weights of the remaining goods. Total value is preserved. If $r=0$, the total-value inequality implies the $W_2$ inequality, while the $W_1$ and $W_3$ inequalities are trivial. Hence suppose $r>0$. When $r=1$, we have $R=1$ and $W_1'\ge W_3'$, with the same required lower bound; it suffices to verify $W_1'$ separately only for $r\ge2$.

\paragraph{The spaced target is reached using only high-valued goods.}
If $t\le H\alpha$, then $q_v=J\le\alpha$. The $q$ goods of $R_{q-1}$ are all worth $H$. Its rejection therefore excludes $q=J$, so $q\le\lambda$. The removed $W_2$-weight is at most $eq\le e\lambda\le s$; the removed $W_3$-weight is at most $Rq\le R\lambda<RJ$. For $r\ge2$, each good has $W_1$-weight at most $r$, so the removed $W_1$-weight is at most $rq\le r\lambda<rJ$. All inequalities survive.

\paragraph{The spaced target is reached using twos, without ones.}
Suppose $H\alpha<t\le H\alpha+2(\zeta-\alpha)$.
Then $\alpha\le\lambda$ and $q_v=s-(e-1)\alpha$. The removed $W_2$-weight is
$p_B+(e-1)a_B\le s$, preserving the $W_2$ inequality. From the original $W_2\ge ks$ and $p_B\le q_v$,
\begin{align*}
 W_3'&\ge(k-1)[s-(e-R)\alpha]
       +(e-R)h_a+(R-1)(\alpha-a_B)\\
   &\ge(k-1)RJ.
\end{align*}
The last step uses $s-(e-R)\alpha=RJ+(e-R)(\lambda-\alpha)$.

For $r\ge2$, put $\epsilon=t\bmod2$, $D_q=q_v-p_B\ge0$, and $c_*=p_B-x_B\ge0$. Using the original $V\ge kt$ gives
\begin{align*}
 W_1'\ge{}&(k-1)[t-(H-r)\alpha]-\epsilon
       +(H-r)h_a +(r-2)(\alpha-a_B)+2D_q+c_*.
\end{align*}
The first bracket exceeds $rJ$ by $(H-r)(\lambda-\alpha)$. If $\epsilon=0$, all remaining terms are nonnegative. If $\epsilon=1$, then $r$ is odd and $r\ge3$. A violation is possible only if every nonnegative integral margin vanishes, forcing
\[
 \alpha=\lambda,\quad h_a=0,\quad a_B=\alpha,\quad D_q=0,\quad c_*=0.
\]
But then $p_B=x_B=\lambda+R$, $a_B=\lambda$, and
$v(B)=H\lambda+2R=t+1$, contrary to the hypothesis. This preserves the $W_1$ inequality.

\paragraph{The spaced target needs ones.}
Finally suppose $t>H\alpha+2(\zeta-\alpha)$. Then
\[
 \alpha\le\lambda,\quad q_v=t-\zeta-(H-2)\alpha,\quad
 \zeta+(e-1)\alpha\le t-s.
\]
Substituting $p_B\le q_v$ into the original total-value bound gives
\begin{align*}
 W_2'\ge{}&(k-1)[t-\zeta-(e-1)\alpha]
          +h_x+(e-1)h_a+(e-1)(\alpha-a_B)
       \ge(k-1)s,\\[0.3em]
 W_3'\ge{}&(k-1)[t-\zeta-(H-R-1)\alpha]
          +h_x+(H-R-1)h_a+(R-1)(\alpha-a_B)
       \ge(k-1)RJ.
\end{align*}
For the second inequality, its bracket is at least $s-(e-R)\alpha\ge RJ$. All coefficients of the remaining terms are nonnegative. For $r\ge2$, the same substitution gives
\begin{align*}
 W_1'\ge{}&(k-1)[t-(H-r)\alpha]
          +(H-r)h_a+(r-2)(\alpha-a_B)+(\zeta-x_B)
       \ge(k-1)rJ.
\end{align*}
Thus every inequality of Lemma~\ref{lem:partition-even} is preserved.

\subsubsection*{Values $\{0,1,2,3,4\}$}
Use the cumulative counts $P,Z,Y,A$ from Appendix~\ref{app:partition-five}.

Let
\[
 \bar{P}=\ceil{P/k},\quad\bar{Z}=\ceil{Z/k},\quad
 \bar{Y}=\ceil{Y/k},\quad\bar{A}=\ceil{A/k},
\]
and let $p,z,y,a$ be the corresponding cumulative counts in $B$. Rank spacing gives
\begin{equation}\label{eq:five-prefixbounds}
 p\le\bar{P},\quad z\le\bar{Z},\quad y\le\bar{Y},\quad a\le\bar{A},
 \quad p\ge z\ge y\ge a,
 \quad p+z+y+a\le t.
\end{equation}
We also have $p\le q\le q_v\le\bar{P}$, and
\begin{equation}\label{eq:five-spaced}
 v(S_r)=\min(r,\bar{P})+\min(r,\bar{Z})+\min(r,\bar{Y})+\min(r,\bar{A}).
\end{equation}
In particular, $S_{p-1}$ is rejected whenever $p\ge1$. Total value is preserved. A prefix containing at least $kJ$ goods retains at least $(k-1)J$, because $X-\ceil{X/k}\ge(k-1)J$ for $X\ge kJ$. This proves the modulus-four row with remainder one.

We first state a consequence of rounding the prefix counts. Consider an inequality
$X_1+\cdots+X_d\ge kC$, with $d=2$ or $3$, and removed counts $x_j\le\ceil{X_j/k}$. If it is violated after removal, put
\[
 \delta=\sum_j\ceil{X_j/k}-C,\quad
 D=\sum_j(\ceil{X_j/k}-x_j),\quad
 s=\sum_jX_j-kC.
\]
Then
\begin{equation}\label{eq:five-rounding}
 1\le\delta\le d-1,\quad s+D\le\delta-1.
\end{equation}
Indeed, the violated inequality says $s+D\le\delta-1$. Meanwhile, the sum of the ceiling deficits $k\ceil{X_j/k}-X_j$ equals $k\delta-s$ and is at most $d(k-1)$. These two facts exclude $\delta\ge d$. For $d=2$, a violated inequality consequently requires $\delta=1$ and $s=D=0$.

\paragraph{The parity row: $P+Y\ge kL$, $t=2L-1$}
A violation requires
\[
 \bar{P}+\bar{Y}=L+1,\quad P+Y=kL,\quad p=\bar{P},\quad y=\bar{Y}.
\]
Both ceiling deficits are positive, so $P<k\bar{P}$, and $q=q_v=\bar{P}$. If $\bar{P}=\bar{Y}$, then $t=4\bar{P}-3$; any target partition needs $P\ge k\bar{P}$ because each good is worth at most four, a contradiction.

Thus $\bar{Y}<\bar{P}$. Rejection of $S_{\bar{P}-1}$ gives us
\[
 \min(\bar{Z},\bar{P}-1)+\bar{A}\le L-2.
\]
Total value and $P+Y=kL$ give $Z+A\ge k(L-1)$. If $\bar{Z}<\bar{P}$, these statements contradict each other immediately. If $\bar{Z}=\bar{P}$, they force $Z=k\bar{P}>P$, also impossible.

\paragraph{The modulus-three row with remainder one}
Write $t=3J-2$. A violation requires
\[
 \bar{P}+\bar{A}=J+1,\quad P+A=kJ,\quad p=\bar{P},\quad a=\bar{A},
 \quad P<k\bar{P}.
\]
Here $q=q_v=\bar{P}$. If $\bar{A}=\bar{P}$, then both $B$ and $S_{q_v}$ consist entirely of fours. Hence $t=4\bar{P}$, while $J=2\bar{P}-1$ gives $t=6\bar{P}-5$, impossible for integral $\bar{P}$.

Otherwise rejection of $S_{\bar{P}-1}$ gives
\[
 \min(\bar{Z},\bar{P}-1)+\min(\bar{Y},\bar{P}-1)\le2J-3,
\]
whereas total value gives $Z+Y\ge2k(J-1)$. If neither barred count equals $\bar{P}$, the inequalities directly conflict. If only $\bar{Z}=\bar{P}$, they force $Z=k\bar{P}>P$. If both equal $\bar{P}$, they give $\bar{P}\le J-1$, so
$Z+Y\le2P<2k\bar{P}\le2k(J-1)$. All cases are impossible.

\paragraph{The modulus-four row with remainder two}
Write $t=4J-2$ and $L=2J$. A violation requires
$\bar{P}+\bar{Z}=L+1$, $P+Z=kL$, $p=\bar{P}$, and $z=\bar{Z}$.
The barred counts cannot be equal because $L+1$ is odd. Thus $\bar{Z}<\bar{P}$, and rejection of $S_{\bar{P}-1}$ gives $\bar{Y}+\bar{A}\le L-3$. But total value gives $Y+A\ge k(L-2)$, a contradiction.

\paragraph{The modulus-four row with remainder three}
Write $t=4J-1$ and suppose $W=P+Z+Y\ge3kJ$ is violated after removal. Use \eqref{eq:five-rounding} with $C=3J$. Then
\[
 \bar{P}+\bar{Z}+\bar{Y}=3J+\delta,\quad
 \delta\in\{1,2\},\quad s+D\le\delta-1.
\]
Total value implies $A\ge k(J-1)-s$. Since $s\le1$, we have $\bar{Y}\ge\bar{A}\ge J-1$.

If $\delta=1$, then $s=D=0$, $p=\bar{P}$, and the removed weight is $3J+1$, so $a\le J-2$. The sorted offsets
$(\bar{P}-J,\bar{Z}-J,\bar{Y}-J)$ sum to one and are all at least $-1$. There are exactly four possibilities. The following are lower bounds obtained from \eqref{eq:five-spaced} and $\bar{A}\ge J-1$:
\begin{center}
\begin{tabular}{ccccc}
\toprule
Offsets & $(3,-1,-1)$ & $(2,0,-1)$ & $(1,1,-1)$ & $(1,0,0)$\\
\midrule
$v(S_{\bar{P}-1})-(4J-1)$ & $0$ & $0$ & $-1$ & $0$\\
\bottomrule
\end{tabular}
\end{center}
Since $S_{\bar{P}-1}$ is rejected, only $(1,1,-1)$ remains. Then $Y\le k(J-1)\le A$, hence $Y=A$. But $y=\bar{Y}=J-1$, so $a=y=J-1$, contradicting $a\le J-2$.

If $\delta=2$, the sorted offsets sum to two and have the following possibilities:
\begin{center}
\begin{tabular}{c|ccccc}
\toprule
Offsets & $(4,-1,-1)$ & $(3,0,-1)$ & $(2,1,-1)$ & $(2,0,0)$ & $(1,1,0)$\\
\midrule
$v(S_{\bar{P}-1})-(4J-1)$ & $1$ & $1$ & $1$ & $1$ & $0$\\
$v(S_{\bar{P}-2})-(4J-1)$ & $0$ & $0$ & $-1$ & $0$ & ---\\
\bottomrule
\end{tabular}
\end{center}
The first row rules out $p=\bar{P}$. Therefore $p=\bar{P}-1$, $D=1$, $s=0$, and $z=\bar{Z}$, $y=\bar{Y}$. The last offset pattern is incompatible with $p\ge z$. The second row and rejection of $S_{p-1}$ leave only $(2,1,-1)$. Once again $Y=A=k(J-1)$, whereas the removed weight is $3J+1$ and so $a\le J-2$. Since $y=\bar{Y}=J-1$, this is impossible. The modulus-four row is preserved.

\paragraph{The modulus-three row with remainder two}
Write $t=3J-1$ and suppose $W=P+Z+A\ge2kJ$ is violated after removal. Now
\begin{equation}\label{eq:five-mod3state}
 \bar{P}+\bar{Z}+\bar{A}=2J+\delta,\quad
 \delta\in\{1,2\},\quad s+D\le\delta-1,
 \quad Y\ge k(J-1)-s.
\end{equation}
In particular $\bar{P}\ge\bar{Z}\ge\bar{Y}\ge J-1$. In fact $\bar{P}\ge J$. Otherwise $\bar{P}=\bar{Z}=J-1$, and the ceiling deficits of $P$ and $Z$ are each at most $s$. The deficit of $A$ is at most $k-1$. Their sum is $k\delta-s$, so
\[
 k(\delta-1)+1\le3s.
\]
This is impossible when $\delta=1,s=0$, and when $\delta=2,s\le1,k\ge3$.

The removed weight is $2J+\delta-D$, so
$y\le J-1-\delta+D$. Let $d_A=\bar{A}-a\le D$. Then
$\bar{A}\le y+d_A\le J-1$ (checking $\delta=1,D=0$, or $\delta=2,D\le1$).
If $p=\bar{P}$, put $e=1$ when $\bar{Z}=\bar{P}$ and $e=0$ otherwise. Equation~\eqref{eq:five-spaced} gives
\[
 v(S_{\bar{P}-1})\ge3J+\delta-2-e.
\]
Rejection is possible only for $\delta=1,e=1$. In that case
$2\bar{P}+\bar{A}=2J+1$ and $\bar{P}\ge J$, forcing
$\bar{P}=\bar{Z}=J$, $\bar{A}=1$, and $q=J$. Since $Z\ge k(J-1)+1$ and $Y\ge k(J-1)$, the rank reduction $R_{J-1}$ contains $J$ goods worth at least two, of which at least $J-1$ are worth at least three. Its value is at least $3J-1$, contradicting rejection.

Thus $p=\bar{P}-1$, which forces $D=1$, $\delta=2$, $s=0$, $z=\bar{Z}$, and $a=\bar{A}$. Also $\bar{P}>\bar{Z}$. If $\bar{P}=J$, then $\bar{Z}=J-1$ and $Y=Z=k(J-1)$. Equation~\eqref{eq:five-mod3state} gives $\bar{A}=3$, but $P\ge k(J-1)+1$ and $P+Z+A=2kJ$ give $A\le2k-1$, a contradiction. Hence $\bar{P}\ge J+1$.

Now $a=\bar{A}\le y\le J-2$. Put $e'=1$ when $\bar{Z}=\bar{P}-1$ and $e'=0$ otherwise. Equation~\eqref{eq:five-spaced} gives
\[
 v(S_{\bar{P}-2})\ge3J-1-e'.
\]
Rejection requires $e'=1$. Hence $2\bar{P}-1+\bar{A}=2J+2$, forcing
$\bar{P}=J+1$, $\bar{Z}=J$, and $\bar{A}=1$. But $v(S_J)\ge3J>t$, so $p=J\le q\le q_v\le J$. Thus $q=J$, and the same count argument makes $R_{J-1}$ acceptable, a final contradiction.

Every inequality of Lemma~\ref{lem:partition-five} is preserved, completing the proof.

\end{proof}

\subsubsection{Why the rightmost choice permits removal}
\begin{proof}[Proof of Lemma~\ref{lem:lower-properties}, property~\ref{item:one}]
Let $B\in\mathcal F_{k,q}$, where $q\le q_v$, and suppose its rank vector is lexicographically at least that of every $t$-acceptable member of $\mathcal F_{k,q}$. If $R_{q-1}$ is acceptable, this comparison implies that every coordinate of $B$ is at least the corresponding coordinate of $R_{q-1}$. Indeed, $R_{q-1}$ is consecutive, so once a coordinate of a strictly increasing rank vector becomes larger, no later coordinate can be smaller. The complement of $B$ is therefore at least as valuable, coordinate by coordinate, as the complement of $R_{q-1}$. Lemma~\ref{lem:rank-reduction} applies.

We may hence assume $R_{q-1}$ is rejected. A singleton is always safely removable, so take $q\ge2$. Lemma~\ref{lem:low-removal} applies if $v(B)\le t$. We only need to analyze $v(B)>t$.

Two elementary observations will be used repeatedly. First, $q=q_v$, and
\begin{equation}\label{eq:one-first-q}
 v(S_{q-1})<t<v(B)\le v(S_q).
\end{equation}
Second, a selected good of value zero or one cannot occur in $B$: removing it would leave an acceptable $(q-1)$-set dominated by $S_{q-1}$. More generally, replacing a selected good by an unselected later good with value loss at most $v(B)-t$ contradicts the assumed lexicographic comparison. Such a rightward replacement preserves membership in $\mathcal F_{k,q}$.

\paragraph{Values $\{0,1,d,d+1\}$.}
Let $w$ be the last value in $S_q$. It cannot be $d+1$: then the rank reduction consists of $q$ goods worth $d+1$ and is acceptable. If $w\le1$, \eqref{eq:one-first-q} and integrality give $v(B)\le t$. Thus $v(B)>t$ requires $w=d$ and, by the preceding observation, all selected goods have value $d$ or $d+1$.

If no selected good has value $d+1$, the consecutive bundle $R_{q-1}$ is worth at least $v(B)$. Otherwise no unselected $d$-good can exist, because swapping it for a selected $(d+1)$-good loses only one. Thus the rank reduction, which lies entirely within the large-good prefix, has no more $d$-goods than $B$ and again is worth at least $v(B)$. Both conclusions contradict rejection. Hence $v(B)\le t$, and Lemma~\ref{lem:low-removal} proves the claim.

\paragraph{Values $\{0,1,2,2e\}$.}
Put $H=2e$. The last spaced value cannot be $H$, since then the rank reduction is acceptable. If it is at most one, there is no excess above $t$. Otherwise it is two. The excess above $t$ must be exactly one, and $B$ has exactly the same value pattern as $S_q$; in particular it contains a two. Any remaining one would be unselected and could replace this two, reaching $t$ farther to the right. Hence no ones remain.

All remaining bundle values are then even. Replace $t$ by $\tau=2\ceil{t/2}$. All feasibility and acceptance statements are unchanged, and the possible excess gives $v(B)=\tau$. Lemma~\ref{lem:low-removal} at $\tau$ applies. It also guarantees the original target $t$.

\paragraph{Values $\{0,1,2,3,4\}$.}
Assume $v(B)>t$. The last spaced value $w$ is at least two and at most three: one cannot cause an excess above the target, and four would make the rank reduction acceptable. Put $\Delta=v(S_q)-v(B)$. Equation~\eqref{eq:one-first-q} gives us
\[
 \Delta+(v(B)-t)\le w-1.
\]

If $w=3$, then $\Delta\le1$. Consequently $B$ consists of threes and fours, possibly with a single three replaced by a two. The rank reduction consists entirely of threes and fours. If $B$ has no four, the rank reduction is at least as valuable. If $B$ has a four, every three in the instance must be selected, since otherwise a four-to-three swap loses one and stays acceptable. The rank reduction can then contain no more threes than $B$, and is again at least as valuable. This contradicts rejection.

Thus $w=2$, $v(B)=t+1$, and $\Delta=0$. Write $P,Y,A$ for the numbers of positive goods, goods worth at least three, and fours, and let $y=\ceil{Y/k}$, $a=\ceil{A/k}$,
The selected value pattern is precisely
\[
 4^a,\quad 3^{\,y-a},\quad 2^{\,q-y},\quad y\le q-1,
 \quad t=2q+y+a-1.
\]
There are no ones: a selected two could be replaced by any unselected one. If a three is selected, all twos are selected. If a four is selected, all threes are selected.

Suppose both fours and threes are selected. All threes and twos in the instance are then selected, giving us
\[
 P=A+q-a\le ka+q-a.
\]
But $S_q$ ends in a two, so $P\ge k(q-1)+1$. These inequalities force $q\le a+1$, whereas the presence of both a three and a two forces $q\ge a+2$, a contradiction.

Suppose threes but no fours are selected. Then $A=0$, all $q-y$ twos are selected, and
\[
 k(q-1)+1\le P=Y+q-y\le ky+q-y.
\]
It follows that $q=y+1$, that there is exactly one two, and that $Y=ky$. The target is $t=3y+1$. Each target bundle without the unique two needs at least $y+1$ threes, while its bundle needs at least $y$. This requires at least $ky+k-1$ threes, contradicting $Y=ky$.

In each remaining case there are no threes in the instance. Indeed, if a four is selected, all threes would be selected; if no four or three is selected, $\ceil{Y/k}=0$. All positive values are therefore two or four. Rounding $t$ up to the next even integer $\tau=t+1$ changes no feasibility or acceptance statement and gives $v(B)=\tau$. Lemma~\ref{lem:low-removal} at $\tau$ completes the proof.

This proves property~\ref{item:one} for all three lower families.
\end{proof}

\subsection{Proof of Lemma~\ref{lem:lower-properties}: Property~\ref{item:two}}\label{app:two}
This section proves property~\ref{item:two} for the lower valuations. We continue with the integral values and targets used in Appendix~\ref{app:one}.

For each additive weight below, its value on a bundle is the sum of its singleton weights. We analyze $B^*=S_{q_v}\cup\{u_2\}$ for $k\ge3$. Write $q=q_v$ and $z=v(u_2)$.

\subsubsection{A bound on the removed weight}
The following bound preserves most partition inequalities at once. The proof of property~\ref{item:two} then checks the remaining small targets for each family.

\begin{lemma}\label{lem:two-weight}
Let $w$ be a nonnegative nondecreasing weight on singleton values, with $w(0)=0$, maximum used weight $M_w$, and total weight $W$. For a bundle $B$, write $w(B)=\sum_{g\in B}w(v(g))$. Then $w(B^*)\le\frac Wk+\left(2-\frac1k\right)M_w$.
If the original target criterion is $W\ge k\tau$ and $\tau\ge2M_w$, deleting $B^*$ preserves $W'\ge(k-2)\tau$.
\end{lemma}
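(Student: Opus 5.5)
The plan is to bound $w(S_{q_v})$ by the usual spacing argument and then pay for $u_2$ separately. The goods $u_1,u_{k+1},\dots,u_{(q-1)k+1}$ of $S_q$ are spaced $k$ apart. Since $w$ is nondecreasing in the singleton value and the instance is ordered, $w(v(u_j))$ is nonincreasing in $j$.

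For $j\ge2$, compare the $j$-th good $u_{(j-1)k+1}$ with the block of $k$ goods $u_{(j-2)k+2},\dots,u_{(j-1)k+1}$. It is the last, hence lightest, good of that block, so its weight is at most the block's average weight. Extending ranks beyond $|U|$ by zero-valued dummies, which have weight $w(0)=0$, these blocks are disjoint subsets of $\{u_2,u_3,\dots\}$. Summing over $j\ge2$ gives $w(S_q)-w(u_1)\le (W-w(u_1))/k$. Hence
\[
 w(S_q)\le \frac Wk+\Bigl(1-\frac1k\Bigr)w(u_1)\le\frac Wk+\Bigl(1-\frac1k\Bigr)M_w .
\]
If $u_2\notin S_q$, which holds since $k\ge3$ unless $q=0$, adding $w(u_2)\le M_w$ gives the stated bound $w(B^*)\le W/k+(2-1/k)M_w$. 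If $u_2$ happens to lie in $S_q$, or $q=0$, the bound holds trivially.

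For the second claim, I compute directly. Set $W'=W-w(B^*)\ge W(1-1/k)-(2-1/k)M_w$. Using $W\ge k\tau$ gives $W'\ge (k-1)\tau-(2-1/k)M_w$. It suffices that $(2-1/k)M_w\le\tau$, and this follows from $\tau\ge2M_w$. Thus $W'\ge(k-2)\tau$. Two minor points need care: that $W(1-1/k)$ is monotone in $W$ (true, since $k\ge1$), and that the dummy goods never affect the weights.

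I expect the main obstacle to be bookkeeping in the averaging step. The blocks must be chosen so that $u_1$ is excluded and they are disjoint; taking the block ending at each spaced good $u_{(j-1)k+1}$ for $j\ge2$ does this. With that in place, the rest is routine arithmetic.
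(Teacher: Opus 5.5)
Your proof is correct, and it reaches the paper's bound through a different decomposition.

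\textbf{How the two routes differ.} The paper slices by value. It writes $w$ as a nonnegative combination of threshold indicators $\mathbf{1}[v(g)\ge\theta]$, whose coefficients sum to $M_w$. Each indicator marks a prefix of some length $X$, from which the spaced sequence takes $\ceil{X/k}\le X/k+1-1/k$ goods. Summing over the layers gives $w(S_{q_v})\le W/k+(1-1/k)M_w$.

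You slice by rank instead. Each spaced good after $u_1$ is the lightest member of the block of $k$ consecutive ranks ending at it. These blocks are disjoint and avoid $u_1$, which gives $w(S_{q_v})\le W/k+(1-1/k)w(u_1)$ directly.

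\textbf{What each buys.} Your route is more elementary: it uses only that weights are nonincreasing in rank, and it exposes the $(1-1/k)w(u_1)$ term explicitly. The paper's route is phrased entirely through the prefix-count bound $\ceil{X/k}$. That is the same counting fact used in the other removal arguments of the appendix, so it slots directly into that language. For example, it adapts at once to sets obeying $\ceil{X/k}+d$ prefix bounds.

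\textbf{Shared steps.} The remaining steps coincide with the paper's: paying at most $M_w$ for $u_2$, and the arithmetic $(k-1)\tau-(2-1/k)M_w\ge(k-2)\tau$.

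\textbf{Minor point.} $u_2\notin S_{q_v}$ holds for every $k\ge2$, and $q_v\ge1$ because $t>0$. So your case remark about $q=0$ is unnecessary.
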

\begin{proof}
Express $w$ as a nonnegative linear combination of indicators of nested value prefixes. If such a prefix has size $X$, the full spaced sequence takes $\ceil{X/k}\le X/k+1-1/k$ of its goods. The coefficients sum to $M_w$. Hence the full spaced weight is at most $W/k+(1-1/k)M_w$. The extra good has weight at most $M_w$. Therefore
\[
 W-w(B^*)\ge(k-1)\tau-\left(2-\frac1k\right)M_w
 \ge(k-2)\tau. \qedhere
\]
\end{proof}

\subsubsection{Proof of property~\ref{item:two}}
\begin{proof}[Proof of Lemma~\ref{lem:lower-properties}, property~\ref{item:two}]
If $q=1$, at most two goods are removed, so at most two original partition parts are destroyed. Hence assume $q\ge2$, which gives $t>v(u_1)\ge z$.

We first bound the total value removed. If $q\ge3$ and the last spaced value is $w$, then
\[
 v(S_{q-1})\ge v(u_1)+w\ge z+w,
\]
so integrality gives
\begin{equation}\label{eq:two-total-q3}
 v(B^*)\le t-1+w+z\le2t-2.
\end{equation}
If the last spaced value is one, $v(S_q)=t$, and $v(B^*)<2t$ as well. Thus only a two-good spaced target can require special treatment of total value.

\subsubsection*{Values $\{0,1,2,3,4\}$}
Use cumulative counts $(P,Z,Y,A)$ for values at least $1,2,3,4$, and put $V=P+Z+Y+A$. The exact target-partition conditions are the applicable inequalities in Lemma~\ref{lem:partition-five}. We prove that each survives deleting $B^*$. Write $h=v(u_1)$, so $1\le h\le4$ and $t\ge h+1$.

\paragraph{Total value.}
Equation~\eqref{eq:two-total-q3} handles $q\ge3$. For $q=2$, put $w=v(u_{k+1})$, so
\[
 h\ge z\ge w,\quad h<t\le h+w,\quad v(B^*)=h+z+w.
\]
For $t\ge6$, this is at most $12\le2t$; for $t\le3$, it is at most $3(t-1)\le2t$. Thus an excess over $2t$ has one of exactly three forms: $t=4$ and $(h,z,w)=(3,3,3)$; or $t=5$ and $(h,z,w)=(4,4,3)$ or $(4,4,4)$.

At $t=4$ with three selected threes, $A=0$, the modulus-three row gives $P\ge2k$, and $Z,Y\ge k+1$. Hence $V\ge4k+2$, while the removed value is nine. At $t=5$ with selected values $(4,4,3)$, the row $P+Z+A\ge4k$ and the bound $Y\ge k+1$ give $V\ge5k+1$; the removed value is eleven. At $t=5$ with three selected fours, $P\ge2k$ and $Z,Y,A\ge k+1$, giving $V\ge5k+3$; the removed value is twelve. In each case the residual total is at least $(k-2)t$.

\paragraph{The odd-target row $P+Y$.}
The maximum used singleton weight is $\ceil{h/2}$. Since $t>h$, Lemma~\ref{lem:two-weight} applies except possibly at $t=5$ with $h\in\{3,4\}$. If $q=2$, the removed weight is at most six, exactly twice the target weight. Otherwise $h=3$ and $S_q$ has values $(3,1,1)$, whose weight is four; adding the extra good costs at most two. The row therefore survives.

\paragraph{The row $P+A$.}
If $h\le3$, its maximum used singleton weight is one and its first possible target with $q\ge2$ is $t=4$, with target weight two. Lemma~\ref{lem:two-weight} applies. If $h=4$, it applies for all applicable targets except $t=7$, whose target weight is three.

At $t=7$, $q=2$ removes weight at most six. If $q\ge3$, the first spaced good is four and no later spaced good is four. Moreover, $q\le4$, so the removed weight is at most $q+3$. It can exceed six only when $q=4$, $z=4$, and $S_q$ has values $(4,1,1,1)$. Here $Z,Y\le k$, because the second spaced good is one. Consequently
\[
 P+A=V-Z-Y\ge7k-2k=5k.
\]
Removing weight seven leaves at least $3(k-2)$, as required.

\paragraph{The row $P+Z+A$.}
The maximum used singleton weight is respectively $1,2,2,3$ at maximum values $h=1,2,3,4$. Together with $t>h$, Lemma~\ref{lem:two-weight} handles every applicable target except $t=5,h=4$. In that case $q=2$ and the removed weight can exceed eight only if all three selected goods are fours. As above, $P\ge2k$ and $Z,A\ge k+1$, giving
\[
 P+Z+A\ge4k+2.
\]
The removed weight is nine, so the residual is at least $4(k-2)$.

\paragraph{The three modulus-four rows.}
For the $P$ row, the first possible applicable target is $t=5$, with target weight two and maximum singleton weight one, so Lemma~\ref{lem:two-weight} applies. For $P+Z$, either $h=1$ and the target weight is at least two, or $h\ge2$ and the first possible applicable target is six, with target weight four. The lemma again applies. Finally, for $P+Z+Y$, every applicable target at least seven has target weight at least six and maximum singleton weight at most three. The only smaller possibility is $t=3$, when $h\le2$ and this row is identical to the total-value row already proved.

Every partition inequality is preserved, proving property~\ref{item:two} for this family.

\subsubsection*{Values $\{0,1,d,d+1\}$}
Let $a,b,c$ count values $d+1,d,1$, and set $x=a+b$. It suffices to treat $d\ge3$, since $d=2$ is a subset of $\{0,1,2,3,4\}$. Let $t=d\lambda+\rho=(d+1)\eta+\sigma$, $0\le\rho<d$, and $0\le\sigma\le d$.
The exact criteria are
\begin{equation}\label{eq:two-adj-criteria}
 V=dx+a+c\ge kt,\quad
 W_1=\rho x+a+c\ge k\rho(\lambda+1),\quad
 W_2=\sigma x+c\ge k\sigma(\eta+1).
\end{equation}
Rows with zero residue are automatic. If no large goods remain, the valuation has values in $\{0,1\}$ and is already covered; assume a large good is present.

\paragraph{Total value.}
By \eqref{eq:two-total-q3}, only $q=2$ with two large spaced goods needs attention. Then $x\ge k+1$, and $B^*$ consists of three large goods. Let $a_*$ be the number of its $(d+1)$-goods. If $t\ge2d$, then $v(B^*)\le3d+3\le4d\le2t$.

Otherwise $t=d+\rho<2d$, and the $W_1$ criterion gives us
\[
 V-kt\ge(d-\rho)(x-k)\ge d-\rho.
\]
The removed excess above $2t$ is $d+a_*-2\rho$, so this suffices when $a_*\le\rho$. If $a>0$, $q\ge2$ forces $\rho\ge2$; if $a=0$, then $a_*=0$. The only exception is therefore $\rho=2,a_*=3$. Here $a,x\ge k+1$, while the $W_2$ criterion at $t=d+2$ says $x+c\ge2k$. Consequently
\[
 V=(d-1)x+a+(x+c)\ge k(d+2)+d,
\]
which exceeds the required slack $d-1$. Total value is preserved.

\paragraph{The $W_2$ criterion.}
For $\sigma>0$, $t>v(u_1)\ge d$ implies $\eta\ge1$, except the zero-residue case $t=d+1$. Thus its target weight is at least twice its maximum singleton weight $\sigma$, and Lemma~\ref{lem:two-weight} applies.

\paragraph{The $W_1$ criterion.}
If no $(d+1)$-good remains, its maximum used singleton weight is $\rho$, and $\lambda\ge1$ makes Lemma~\ref{lem:two-weight} applicable. If $(d+1)$-goods remain, the maximum is $\rho+1$. The uniform bound covers $\lambda\ge3$ and $\lambda=2,\rho\ge2$. Since $t>d+1$, the only remaining cases are
\[
 \lambda=1,\quad 2\le\rho<d,
 \quad\text{or}\quad
 \lambda=2,\quad\rho=1.
\]
For $\lambda=1$, if $x\le k$ then the first spaced target is one $(d+1)$-good and $\rho-1$ ones. Its $W_1$-weight equals the target $2\rho$; adding one singleton costs at most $\rho+1\le2\rho$. If $x\ge k+1$, then $q=2$ and the removed weight is $3\rho+a_*$. It is at most $4\rho$ except when $\rho=2,a_*=3$. In that exception,
\[
 W_1=(x+c)+x+a\ge2k+(k+1)+(k+1)=4k+2,
\]
using the same $W_2$ row as above. The removed excess is only one.

For $\lambda=2,\rho=1$, the target $W_1$-weight is three. If $q\le2$, the removed weight is at most six. Otherwise $x\le k$: if $x\ge k+1$, the first two spaced large goods, including a $(d+1)$-good, already reach $2d+1$. Hence $q=d+1$, and
\[
 W_1=V-(d-1)x\ge k(d+2),\quad W_1(B^*)\le d+4.
\]
The residual weight is at least $3(k-2)$, since
\[
 k(d+2)-(d+4)-3(k-2)=k(d-1)-d+2\ge0.
\]
All criteria in \eqref{eq:two-adj-criteria} survive. This proves property~\ref{item:two} for this family.

\subsubsection*{Values $\{0,1,2,2e\}$}
Let $H=2e\ge4$, and let $a,b,c$ count values $H,2,1$. If no $H$-goods remain, the valuation is a subset of $\{0,1,2,3,4\}$. Otherwise $q\ge2$ implies $t>H$. Write
\[
 t=H\lambda+r,\quad 0\le r<H,\quad R=\ceil{r/2},\quad s=\ceil{t/2}=e\lambda+R.
\]
The exact criteria are
\begin{align}\label{eq:two-even-criteria}
 V&=Ha+2b+c\ge kt, & W_1&=ra+2b+c\ge kr(\lambda+1),\\
 W_2&=ea+b+c\ge ks, & W_3&=Ra+b+c\ge kR(\lambda+1).\nonumber
\end{align}
Rows with $r=0$ involving $W_1,W_3$ are automatic, and $\lambda\ge1$ throughout. Put $\alpha=\ceil{a/k}$ and $\zeta=\ceil{(a+b)/k}$.

\paragraph{Total value.}
Only $q=2$ requires attention. If the second spaced good is worth at most two, then $v(B^*)\le2H+2\le2t$. Otherwise $a\ge k+1$ and $B^*$ consists of three $H$-goods. The case $t\ge3H/2$ is immediate. In the remaining case $t=H+r$ with $r>0$, the $W_1$ criterion gives
\[
 V-kt\ge(H-r)(a-k)\ge H-r.
\]
The removed excess is $3H-2t=H-2r$, no larger than that slack.

\paragraph{The $W_3$ criterion.}
For $r>0$, its maximum singleton weight is $R$, and its target weight $R(\lambda+1)$ is at least $2R$. Lemma~\ref{lem:two-weight} applies.

\paragraph{The $W_2$ criterion.}
The uniform bound applies for $s\ge2e$, that is, $t\ge2H-1$. Otherwise $t=H+r$, $1\le r\le H-2$, and $s=e+R$.

If $\alpha\ge2$, then $q=2$ and $W_2(B^*)=3e$. From the $W_3$ criterion,
\[
 W_2\ge2kR+(e-R)a\ge k(e+R)+(e-R).
\]
This slack exceeds the removed excess $3e-2(e+R)=e-2R$.

If $\alpha=1$ and twos suffice for the spaced target, then $q=1+R$ and $W_2(S_q)=s$. Adding one good costs at most $e<s$, so the row survives. If ones are needed, then
\[
 2(\zeta-1)<r,\quad q=r-\zeta+2,\quad
 L:=W_2(S_q)=e+r-\zeta+1\ge s.
\]
Using $a\le k$, $a+b\le k\zeta$ and the total-value criterion gives
\[
 W_2=V-(e-1)a-(a+b)\ge kL.
\]
Thus $W_2-W_2(B^*)\ge(k-1)L-e\ge(k-2)s$.

\paragraph{The $W_1$ criterion.}
For $r\ge2$, the singleton weights are nondecreasing with value and have maximum $r$, so Lemma~\ref{lem:two-weight} applies. For $r=1$, the weights at values $1,2,H$ are $1,2,1$, respectively, so we use a direct count instead. The weight of the full spaced sequence is
\[
 \ceil{(a+b+c)/k}+\ceil{(a+b)/k}-\ceil{a/k}
 \le \frac{W_1}{k}+2-\frac2k.
\]
All weights are nonnegative, and the additional good has weight at most two. Thus
\[
 W_1(B^*)\le\frac{W_1}{k}+4-\frac2k.
\]
When $\lambda\ge3$, the original inequality $W_1\ge k(\lambda+1)$ gives
\[
 W_1-W_1(B^*)\ge(k-1)(\lambda+1)-4+\frac2k
 \ge(k-2)(\lambda+1).
\]
Only $t=\lambda H+1$ with $\lambda\in\{1,2\}$ remains. Its target weight is $\tau=\lambda+1$.

If $\alpha\ge\lambda+1$, then $q=\lambda+1$ and all $q+1$ goods of $B^*$ are high, so their weight is $\lambda+2\le2\tau$. If $\alpha\le\lambda-1$, let $L=t-(H-1)\alpha$.
The total-value criterion gives $W_1\ge kL$. The spaced target ends in a two or a one, so $W_1(S_q)\le L+1$ and $W_1(B^*)\le L+3$. Since $L-\tau=(H-1)(\lambda-\alpha)\ge3$,
the residual weight is at least $(k-2)\tau$.

Finally let $\alpha=\lambda$. The positive-good count is at least $k\lambda+1$, since every good is worth at most $H$ and $V\ge k(\lambda H+1)$. Hence $q=\lambda+1$. If $\lambda=2$, the second good is high and $W_1(B^*)\le5\le6=2\tau$. If $\lambda=1$ and $a\ge2$, the second good is again high and $W_1(B^*)\le4=2\tau$. In the remaining case $\lambda=a=1$,
\[
 W_1=V-(H-1)\ge2k+(k-1)(H-1),\quad W_1(B^*)\le5.
\]
Its original slack is at least six, while the removed excess is at most one.

Thus all four inequalities are preserved, proving property~\ref{item:two} for this family and completing the proof of Lemma~\ref{lem:lower-properties}.
\end{proof}

\section{Proof of Theorem~\ref{thm:extension}}\label{app:extension}
We give the complete proof of Theorem~\ref{thm:extension}. We first derive the consequence of property~\ref{item:two} used to account for additional lower-valued removals.

\subsection{Additional Removals}

\begin{lemma}\label{lem:additional-removals}
Fix an ordered valuation $v$, an integer $k\ge3$, a positive feasible $k$-target $t$, and its first acceptable spaced size $q_v$. Suppose removing $S_{q_v}\cup\{u_2\}$ leaves a $(k-2)$-partition at $t$. For every $1\le d\le k-2$, removing any set $B$ satisfying
\[
 |B|\le q_v+d,\quad
 |B\cap\{u_1,\dots,u_X\}|\le\ceil{X/k}+d
 \quad\text{for every prefix length }X
\]
leaves a $(k-d-1)$-partition at $t$.
\end{lemma}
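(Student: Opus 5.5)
The plan is to reduce the statement to one explicit worst-case set via a rank-domination argument, and then handle that set by peeling off single goods. For $1\le d\le k-2$, define
\[
 C_d=S_{q_v}\cup\{u_2,\dots,u_{d+1}\}.
\]
Since $d+1\le k-1<k+1$, the goods $u_2,\dots,u_{d+1}$ lie strictly between $u_1$ and $u_{k+1}$, so they are disjoint from $S_{q_v}$. Hence $|C_d|=q_v+d$. Its prefix counts are:
\begin{itemize}[leftmargin=2.2em]
 \item $X$ for $X\le d+1$;
 \item $1+d=\ceil{X/k}+d$ for $d+1\le X\le k$;
 \item $\min\{\ceil{X/k},q_v\}+d$ for $X>k$.
\end{itemize}
In every case this is exactly $\min\{X,\ \ceil{X/k}+d,\ q_v+d\}$.

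\textbf{Step 1 (domination).} Any $B$ as in the statement satisfies $|B\cap\{u_1,\dots,u_X\}|\le\min\{X,\ceil{X/k}+d,|B|\}$. Therefore $B$ meets every prefix in at most as many goods as $C_d$ does. Pad $B$ with zero-valued dummy goods placed beyond $|U|$ until $|B|=|C_d|$; this does not increase any prefix count within $U$. Now list both sets in rank order. The prefix inequality says that the $j$-th good of $B$ has rank at least that of the $j$-th good of $C_d$. Equivalently, listing the complements $U\setminus B$ and $U\setminus C_d$ in rank order, the $j$-th good of $U\setminus B$ has rank at most that of the $j$-th good of $U\setminus C_d$. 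Because the instance is ordered, every good of $U\setminus B$ is then worth at least its counterpart in $U\setminus C_d$. A target partition of $U\setminus C_d$ therefore transfers to $U\setminus B$ by replacing each good with its counterpart; no bundle decreases in value. This is the same exchange principle as in Lemma~\ref{lem:rank-reduction}. So it suffices to show that removing $C_d$ leaves a $(k-d-1)$-partition at $t$.

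\textbf{Step 2 (induction on $d$).} For $d=1$, $C_1=S_{q_v}\cup\{u_2\}$, and the claim is exactly the hypothesis. For $2\le d\le k-2$, assume $U\setminus C_{d-1}$ has a $(k-d)$-partition at $t$. Then $U\setminus C_d$ is obtained by deleting the single good $u_{d+1}$. Delete the part containing it (or an arbitrary part, if none does), and merge that part's other goods into some remaining part. This yields a $(k-d-1)$-partition at $t$, and $k-d-1\ge1$ because $d\le k-2$.

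\textbf{Main obstacle.} The only delicate point is the prefix-count bookkeeping in Step 1. One must check that $C_d$ attains the pointwise maximum $\min\{X,\ceil{X/k}+d,q_v+d\}$ allowed by the hypotheses, with the threshold $X=k$ handled correctly. One must also confirm that padding with dummies is harmless, i.e., that dummies placed beyond $|U|$ never raise a prefix count inside $U$. This is where $d\le k-2$ enters: it guarantees $d+1<k+1$, so the extra goods $u_2,\dots,u_{d+1}$ do not collide with $S_{q_v}$ and the count of $C_d$ is exactly $q_v+d$.
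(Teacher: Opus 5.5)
Your proof is correct. It rests on the same two ingredients as the paper's proof: rank domination against $S_{q_v}\cup\{u_2\}$, and the fact that deleting one good destroys at most one target part. You apply them in the opposite order. The paper first deletes the first $d-1$ members of $B$ in rank order and dominates the remainder by $S_{q_v}\cup\{u_2\}$, whereas you dominate all of $B$ by the single extremal set $C_d$ and then peel $u_3,\dots,u_{d+1}$ off $C_d$ one good at a time.
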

\begin{proof}
When $d=1$, the assumed rank and cardinality bounds imply that the increasing ranks of $B$ are componentwise at least the corresponding ranks in $S_{q_v}\cup\{u_2\}$. Indeed, the first two ranks are at least $1,2$, and the $j$th rank for $j\ge3$ is at least $(j-2)k+1$. Consequently, after deleting $B$, the $j$-th largest remaining value is at least the $j$-th largest value remaining after deleting $S_{q_v}\cup\{u_2\}$, for every corresponding position. There may also be extra goods. A target partition therefore transfers to the complement of $B$.

For $d>1$, remove the first $d-1$ members of $B$ in rank order and call them $E$. If $|B|<d-1$, deleting $B$ directly destroys at most $d-1$ parts of the original partition. Otherwise, $B\setminus E$ satisfies the $d=1$ bounds: every prefix containing more than $d-1$ members of $B$ loses exactly $d-1$ of them. Its complement has a $(k-2)$-partition. Deleting the remaining $d-1$ goods in $E$ destroys at most $d-1$ of those parts, leaving at least $k-d-1$ target bundles. Any unused goods or excess bundles can be merged into the retained bundles.
\end{proof}

\subsection{Proof of the Extension Theorem}

\begin{proof}[Proof of Theorem~\ref{thm:extension}]
Let $U$ be the lower goods. An empty added class changes nothing, so assume it contains at least one good. Fix a target $t>0$ admitting a $k$-partition, and write the upper count as $a=k\ell+r$. For a removed set $B$, let $b$ be its number of upper goods and $c$ its number of lower goods. All lower-prefix bounds below refer to ranks within $U$. We prove property~\ref{item:one} and, for every $1\le d\le k-2$, the conclusion of Lemma~\ref{lem:additional-removals}. The latter implies property~\ref{item:two}, because $S_{q_v}\cup\{u_2\}$ meets its bounds with $d=1$. For the lower valuation, that conclusion follows from property~\ref{item:two}.

\paragraph{Initial cases.}
Suppose $t\le\ell H$ and put $q_0=\ceil{t/H}\le\ell$. The first acceptable spaced size is $q_0$. Property~\ref{item:one} follows from cardinality alone: deleting at most $q_0$ goods leaves at least $(k-1)q_0$ upper goods.

For the additional-removal conclusion, deleting at most $q_0+d$ goods leaves at least $(k-d-1)q_0$ upper goods, because $k\ell+r-(q_0+d)-(k-d-1)q_0
 =k(\ell-q_0)+r+d(q_0-1)\ge0$.
These upper goods alone meet all required targets.

Now suppose $t>\ell H$. By Lemma~\ref{lem:recursion}, let $s=t-\ell H$ and $0<s\le\MMS_{k-r}(v,U)\le H$.

If $\ell=0$, then $r>0$ and $q_v=1$. Property~\ref{item:one} deletes at most one good, while the additional-removal bounds permit at most $d+1$ goods; discarding the original partition parts containing those goods proves both properties.

Henceforth assume $\ell\ge1$. If $r>0$, then $q_v=\ell+1$. If $r=0$, the lower goods have a $k$-partition at $s$, and $q_v=\ell+q_U$, where $q_U$ is the lower valuation's first acceptable $k$-spaced size at $s$. We now prove the two properties separately.

\paragraph{Property~\ref{item:one}.}
Let $B$ satisfy the premises of property~\ref{item:one}. Suppose first that $r>0$, so $b+c\le\ell+1$. After giving $\ell$ upper goods to each of $k-1$ residual bundles, the number of spare upper goods is $z=a-b-(k-1)\ell=\ell+r-b\ge0$.
If $z\ge k-1$, every residual bundle is covered. Otherwise, at most $k-1-z=(k-r)-(\ell+1-b)$ residual bundles need lower goods at target $s$. Deleting $c\le\ell+1-b$ lower goods destroys at most that many parts of a $(k-r)$-partition of $U$ at $s$, leaving enough.

Now let $r=0$. The upper-prefix spacing bound gives $b\le\ell$. Put $e=\ell-b$. The lower removed set satisfies
\begin{equation}\label{eq:extension-lower-counts}
 c\le q_U+e,\quad
 |B\cap\text{lower prefix }X|\le\ceil{X/k}+e.
\end{equation}
There are exactly $e$ spare upper goods after allocating $\ell$ to each residual bundle. If $e\ge k-1$, these suffice. If $1\le e\le k-2$, apply Lemma~\ref{lem:additional-removals} to the lower valuation and \eqref{eq:extension-lower-counts}. It leaves $k-e-1$ lower target bundles, and the $e$ spare upper goods cover the rest.

It remains to treat $e=0$. If $c=0$, there is nothing to remove below the upper class. Otherwise $c\le q_U$, and the removed lower set is $k$-spaced. Keep the chosen $\ell$ upper goods fixed. Any lexicographically later, acceptable lower $c$-set could replace the lower part of $B$ and would give a lexicographically later acceptable member of the original $\mathcal F_{k,q}$: its lower ranks, shifted by $a=k\ell$, satisfy the required global spacing. This contradicts the comparison assumed for $B$. Property~\ref{item:one} for the lower valuation therefore leaves a $(k-1)$-partition at $s$.

\paragraph{Property~\ref{item:two}.}
We prove the stronger additional-removal conclusion. Fix $1\le d\le k-2$, let $B$ satisfy the bounds of Lemma~\ref{lem:additional-removals}, and put $h=k-d-1$.

Suppose first that $r>0$. The spare upper count is
\[
 z=a-b-h\ell=(d+1)\ell+r-b\ge d(\ell-1)+r-1\ge0,
\]
since $b+c\le\ell+1+d$. If $z\ge h$, upper goods suffice. Otherwise, the number of lower target bundles required is
\[
 h-z=(k-r)-D,\quad D=(d+1)(\ell+1)-b.
\]
But $c\le\ell+1+d-b\le D$. Delete the at most $c$ affected parts of a lower $(k-r)$-partition. Spare upper goods cover the other residual bundles.

Now let $r=0$. The upper-prefix bound gives $b\le\ell+d$. The spare upper count is
\[
 z=(d+1)\ell-b\ge d(\ell-1)\ge0.
\]
If $z\ge h$, upper goods suffice. Otherwise set $d'=d+z=(d+1)\ell+d-b$ for all $1\le d'\le k-2$.
The lower removed set obeys
\[
 c\le q_U+\ell+d-b\le q_U+d',\quad
 |B\cap\text{lower prefix }X|
 \le\ceil{X/k}+\ell+d-b\le\ceil{X/k}+d'.
\]
Applying Lemma~\ref{lem:additional-removals} to the lower valuation leaves $k-d'-1=h-z$ lower target bundles. The $z$ spare upper goods cover the remaining residual bundles.

In both arguments for $t>\ell H$ and $\ell\ge1$, every constructed lower bundle is combined with $\ell$ upper goods, and every upper-only bundle has at least $\ell+1$ upper goods; their values are at least $t$. Extra goods can be assigned arbitrarily. The condition in Definition~\ref{def:domain} remains true after deletions, so the argument iterates over any finite number of classes.
\end{proof}

\section{Running-Time Bounds in Theorem~\ref{thm:main}}\label{app:implementation}
We justify the running time in Theorem~\ref{thm:main}, including recognition of the valuation domain and exact computation of all targets. The counts and inequalities in this section are evaluated separately for each agent; no common value classes are required.

\subsection{Finding a Lower Part}
After sorting an agent's values, group equal positive values and compute their multiplicities and the total value below each class. The lower part can contain at most four distinct positive values. Hence there are at most five possible choices for it: the first zero, one, two, three, or four positive values in increasing order. For each choice, check the condition in Definition~\ref{def:domain} at every remaining class. This takes linear time in the number of goods because the number of choices is constant.

Membership of the proposed lower values in one of the three scaled families can also be checked with constant time. An empty set or a single positive value is immediate. For the family $\{0,1,2,3,4\}$, let $x$ be any positive lower value and try the four scales $x/j$, $j\in\{1,2,3,4\}$; one is the correct scale whenever such a representation exists.

For $\{0,1,d,d+1\}$ with three distinct positive lower values, the smallest is the scale. Check that the normalized values are $1,d,d+1$ for an integer $d\ge2$. With two positive values $x<y$, it suffices to try scales $x$ and $y-x$. The first gives a representation exactly when $y/x$ is an integer at least two. The second gives a representation as $d,d+1$ exactly when $x/(y-x)$ is an integer at least two. These alternatives include every two-positive-value subset of the adjacent family.

For $\{0,1,2,2e\}$ with three distinct positive values, the smallest is again the scale; check that the normalized middle value is two and the normalized largest value is an even integer at least four. Every two-positive-value subset of this family has an integer ratio and was already considered above. No family other than $\{0,1,2,3,4\}$ has four distinct positive lower values. These tests are therefore complete. They do not search over the possible numerical sizes of $d$ or $e$.

Deleting goods preserves any accepted description, so it need not be found again during allocation. Sorting all agents' values and recognizing the domain takes $O(nm\log(2+m))$ time.

\subsection{Computing Exact MMS Values}
For an accepted description, start with bundle count $k=n$ and total zero, and process the higher value classes in decreasing order. At a class with value $H$ and multiplicity $a$, Lemma~\ref{lem:recursion} adds $\floor{a/k}H$ to the total and replaces $k$ by $k-(a\bmod k)$. The number of bundles remains positive. After the last higher class, evaluate the relevant formula from Appendix~\ref{app:partitions} at the resulting $k$, multiply by the private scale, and add it to the total. An empty lower part contributes zero.

Every class takes a constant number of operations. The formulas in Corollaries~\ref{cor:mms-adj}, \ref{cor:mms-even}, and~\ref{cor:mms-five} also take a constant number once the lower counts are known. Thus all agents' exact MMS values require $O(nm)$ operations after sorting. If targets below MMS are supplied instead, the allocation step uses those fixed targets.

\subsection{Finding the Rightmost Acceptable Bundle}
Retain each agent's original sorted array of singleton values, indexed by the common auxiliary ranks. Maintain the set of unallocated ranks, and, for each agent, a binary indexed tree \citep{fenwick1994} storing her value at each unallocated rank and zero at each allocated rank. Such a tree uses linear space and permits prefix-sum queries and changes of one array entry in logarithmic time. Differences of two prefix sums give the value of an interval.

At the start of an iteration, form the list mapping current ranks $1,\dots,p$, where $p=|U|$, to original auxiliary ranks. This costs $O(m)$ per iteration and $O(nm)$ overall. With $k$ agents remaining, scan the spaced positions until some target is met. Finding $q$ and all agents' partial sums along $S_q$ takes $O(kq)$ operations. Store also the suffix sums along these same $q$ spaced positions; this takes another $O(kq)$ operations.

Suppose ranks $r_1,\dots,r_{j-1}$ are fixed and the proposed next rank is $r$. The earliest feasible completion is given by \eqref{eq:completion}. It is acceptable whenever any completion of this proposed prefix is acceptable, since it is coordinate by coordinate at least as valuable to every agent. To evaluate it, observe that its first
\[
 c=\min\left\{q-j+1,\ 1+\floor{\frac{r-((j-1)k+1)}{k-1}}\right\}
\]
unfixed ranks form the consecutive interval $r,\dots,r+c-1$. Its remaining ranks are the corresponding suffix of $S_q$. The interval maps to an interval of original ranks; deleted entries contribute zero in every tree. For each agent, the proposed bundle value is therefore her already fixed prefix value, one interval sum, and one stored spaced-suffix sum. This takes $O(\log(2+m))$ time.

Candidate acceptability is nonincreasing in $r$. Binary search over
\[
 \max\{(j-1)k+1,r_{j-1}+1\}\le r\le p-q+j
\]
finds the largest acceptable rank. For $j=1$, the lower bound is one. The interval of candidates is nonempty because the previously fixed prefix has an acceptable completion. Once $r_j$ is fixed, add its singleton value to each agent's fixed-prefix sum and continue with $j+1$. Every search includes the current accepting possibilities of all remaining agents, not a recipient selected in advance. Thus this procedure returns exactly the bundle in Algorithm~\ref{alg:exact} and leaves its choice of recipient unrestricted.

The cost per coordinate is $O(k\log^2(2+m))$. Each allocated bundle contains $q$ actual goods, so $\sum q\le m$ and $\sum kq\le nm$ over all iterations. The coordinate searches therefore cost $O(nm\log^2(2+m))$ operations in total. Updating all trees after removals costs $O(nm\log(2+m))$, and the spaced scans and current-rank lists cost $O(nm)$.

For the final transfer, keep each agent's original preference list and a pointer to its first unallocated good. Whenever she receives an auxiliary rank, advance her pointer past allocated goods and assign its current good. Each pointer advances at most $m$ times, so this step costs $O(nm)$. The sorted arrays, trees, and lists use $O(nm)$ space. The all-zero-target and single-remaining-agent cases require no bundle search and satisfy the same bounds.

\end{document}